\renewcommand{\b}{\boldsymbol{b}}
\newcommand{\x}{\boldsymbol{x}}
\newcommand{\X}{\boldsymbol{X}}
\newcommand{\y}{\boldsymbol{y}}
\newcommand{\Y}{\boldsymbol{Y}}
\renewcommand{\u}{\boldsymbol{u}}
\newcommand{\U}{\boldsymbol{U}}
\newcommand{\p}{\boldsymbol{p}}
\newcommand{\q}{\boldsymbol{q}}
\newcommand{\Z}{\mathbb{Z}}
\newcommand{\F}{\mathbb{F}}
\renewcommand{\O}{\mathcal{O}}
\DeclareMathOperator{\sign}{sgn}
\newcommand*{\rv}[1]{\mathsf{#1}}
\newcommand*{\Rv}[1]{\boldsymbol{\mathsf{#1}}}
\newtheorem{theorem}{Theorem}
\newtheorem{lemma}{Lemma}
\newtheorem{claim}{Claim}
\definecolor{customgreen}{RGB}{77, 155, 74}
\definecolor{custombrown}{RGB}{166, 86, 40}
\pgfplotsset{compat=newest}
\begin{document}
\title{GC+ Code: A Systematic Short Blocklength Code for Correcting Random Edit Errors in DNA Storage} 


\author{%
  \IEEEauthorblockN{Serge Kas Hanna}
  \IEEEauthorblockA{
\\ C\^{o}te d'Azur University, CNRS, I3S, Sophia Antipolis, France \\
                    Email: serge.kas-hanna@\{univ-cotedazur.fr, cnrs.fr\} 
                    \thanks{This paper builds on our conference version presented at the 2024 IEEE International Symposium on Information Theory (ISIT)~\cite{isit2024}, with significant improvements to the code construction and additional analysis.}
}
\vspace{-0.5cm}
}

\maketitle

\begin{abstract}
Storing digital data in synthetic DNA faces challenges in ensuring data reliability in the presence of edit errors—deletions, insertions, and substitutions—that occur randomly during various stages of the storage process. Current limitations in DNA synthesis technology also impose the use of short DNA sequences, highlighting the particular need for short edit-correcting codes. Motivated by these factors, we introduce a systematic code designed to correct random edits while adhering to typical length constraints in DNA storage. We evaluate its performance both theoretically and through simulations, and assess its integration within a DNA storage framework, revealing promising results.
\end{abstract}

\section{Introduction}
DNA storage has emerged as a promising medium for next-generation storage systems due to its high density (\mbox{$10^{15}$-$10^{20}$} bytes per gram of DNA~\cite{DNA}) and long-term durability (thousands of years~\cite{grass2015robust}). One of the main challenges in DNA storage is ensuring data reliability in the presence of {\em edit} errors, i.e., deletions, insertions, and substitutions, which may occur during various stages of the storage process. A potential source of edit errors is sequencing and synthesis noise, with the actual error rate influenced by factors such as the technologies used and the length of the DNA sequence. Due to the limitations of current DNA synthesis technologies, binary data is typically encoded in the form of several short DNA sequences, called oligos, usually a few hundred nucleotides long, to minimize synthesis noise~\cite{heckel2019characterization}. In this setting, a common approach to enhance reliability involves using a combination of an inner and outer error-correction code. Ideally, the inner code should be a short code capable of correcting edit errors within the oligos, while the outer code can be a longer erasure/substitution code designed to recover lost oligos or correct residual errors from the inner code.

Designing error-correction codes for edit errors is a fundamental problem in coding theory, dating back to the 1960s~\cite{VT65, L66}. This problem has gained increased interest in recent years, with numerous works dedicated to constructing codes for correcting: only deletions {\em or} insertions, e.g.,~\cite{B16,SimaIT,SimaSYS,Sch17}, deletions {\em and} substitutions~\cite{gabrys2017codes,smagloy2023single,gabrys2022beyond,song2022systematic}, and sticky insertions/deletions~\cite{mahdavifar2017asymptotically,10206803}. However, much less is known about codes correcting all three types of edits simultaneously~\cite{cai2021correcting,tang2023correcting}. The aforementioned works focus on correcting adversarial (i.e., worst-case) errors with zero-error decoding, often relying on asymptotics that apply to scenarios with a small number of errors and large code lengths. While such assumptions are typical in coding theory, they do not naturally extend to DNA storage systems, which require different considerations. Specifically, in addition to the need for short codes due to synthesis limitations, edit errors in DNA storage are known to be of random nature and potentially of large quantity. Some earlier studies have proposed concatenated coding schemes for correcting random edit errors, e.g.,~\cite{davey2001reliable,ratzer2005marker}; however, the code lengths in these constructions are also large, typically in the order of several thousands.

Due to the lack of suitable edit-correcting codes, standard approaches in the literature rely on sequencing redundancy to correct edit errors in DNA storage. High-throughput sequencing and amplification can generate many reads per oligo, introducing sequencing redundancy analogous to repetition coding. This redundancy can be exploited in various ways, such as using sequence alignment algorithms to correct edits via majority voting. While leveraging sequencing redundancy has been shown to enhance reliability in several studies~\cite{goldman2013towards, grass2015robust, blawat2016forward, erlich2017dna, yazdi2017portable, chandak2020overcoming, press2020hedges, trellisBMA, maarouf2022concatenated, welzel2023dna, hamoum2023synchronization}, generating large numbers of redundant reads also has drawbacks and limitations. Namely, it is a resource-intensive process that incurs high read costs, both in terms of money and time, and the redundancy it provides is typically beneficial for correcting only sequencing errors~\cite{press2020hedges}. Thus, designing edit-correcting codes that are practical as inner codes in DNA storage remains an intriguing area for exploration.

Motivated by the need to ensure reliability and reduce read costs in DNA storage, we introduce a binary code designed to correct random edit errors, while also being practical for the typical short lengths in DNA storage. The code construction is inspired by the Guess \& Check (GC) code, initially introduced in~\cite{GCisit}, for correcting only deletions. We improve and generalize the previous code design by integrating novel encoding and decoding strategies to simultaneously correct deletions, insertions, and substitutions.  Our main contributions are as follows:
\begin{itemize}[leftmargin=*]
 \item In Section~\ref{GCC}, we present the GC+ code, a binary, systematic short blocklength code for correcting random edit errors. We detail its encoding and decoding procedures, discuss code properties with respect to parameter selection, and elaborate on the construction components.  
    \item In Section~\ref{sec:res}, we analyze the performance of the GC+ code both theoretically and through simulations. We first derive analytical expressions to evaluate its decoding error probability (frame error rate) over a random channel with independent and identically distributed (i.i.d.) edit errors. We then present simulation results in the short blocklength regime (a few hundred bits) and at moderate-to-high code rates (above~0.5), showing close agreement between the analytical and empirical error probabilities. Furthermore, we show that the code is highly efficient for localized/burst edits, which are also relevant to DNA storage according to recent statistical evidence from experimental data~\cite{tang2023correcting}.
    \item In Section~\ref{DNA}, we demonstrate how the binary GC+ code can be applied to DNA sequences and extend our theoretical and empirical analysis to the quaternary domain. We then compare GC+ with the HEDGES code~\cite{press2020hedges}, highlighting scenarios where GC+ attains significantly lower decoding error probability at higher code rates. Finally, we illustrate the integration of GC+ as an inner code within a DNA storage framework, paired with an outer Reed–Solomon code, enabling reliable data retrieval with minimal read and write costs.
\end{itemize}

\section{Notation} \label{notation}

Let $[n]\triangleq\{1,2,\ldots,n\}$ be the set of integers from $1$ to $n$, and $[i,j]\triangleq\{i,i+1,\ldots,j\}$ denote the set of integers from $i$ to $j\geq i$. Let $\F_q$ be the Galois field of size $q$. Bold letters represent vectors, where lowercase $\x$ denotes a binary vector and uppercase $\X$ denotes a vector in a larger field. We use superscripts to index vectors as $\x^i$ and subscripts to index elements within a vector as $x_i$. For a vector $\x$, $\x_{[i,j]} = (x_i, x_{i+1}, \ldots, x_j)$ represents the substring containing the consecutive bits indexed by $[i,j]$. We use $\langle \x^1,\x^2\rangle$ to refer to the concatenation of two vectors $\x^1$ and~$\x^2$. Let $\boldsymbol{1}^i$ and $\boldsymbol{0}^j$ denote sequences of $i$ consecutive ones and $j$ consecutive zeros, respectively. The $p$-norm of a vector is denoted by $\| \x \|_p$, with $p\geq 1$. Additionally, we use $\|\x\|_0$ to refer to the number of non-zero elements in $\x$. We define $\sign(\alpha)$ as the function that returns the sign of a real number~$\alpha$, with $\sign(0)=+1$ by convention. All logarithms in this paper are of base $2$. Following standard notation, $f(n) = \mathcal{O}(g(n))$ means that $f$ is asymptotically bounded above by $\kappa g(n)$ for some constant $\kappa> 0$. For $a,b \in \mathbb{Z}$, we adopt the following convention for the binomial coefficient $\binom{b}{a}$,
\begin{equation*} \label{eqY}
  \binom{b}{a} = \left\{\def\arraystretch{1.2}%
  \begin{array}{@{}c@{\quad}l@{}}
    1, &  \text{if } b=a,\\
    0, & \text{if } a > b \text{ or } \{a < 0 \text{ and } b \neq a\}, \\
	\frac{b!}{a!(b-a)!}, & \text{otherwise}.\\
  \end{array}\right.
\end{equation*}
We denote the indicator function by \( \mathds{1}_{\{\text{condition}\}} \in \{0,1\} \), which equals $1$ when the condition is true and $0$ otherwise. The probability mass function (PMF) of a random variable $\rv{X}$ (sans-serif) is denoted by $\Pr(\rv{X} = x)$, where we sometimes omit the random variable, e.g., $\Pr(x)$, when it is clear from the context.

\section{GC+ Code} \label{GCC}
\subsection{Encoding}
Consider a binary information message $\u \in \F_2^k$ of length~$k$. Let $\mathsf{Enc}: \F_2^k \to \F_2^n$ be the encoding function that maps the message $\u$ to its corresponding codeword $\x \in \F_2^n$ of length~$n$. The encoding process $\x = \mathsf{Enc}(\u)$ involves the following steps:

\begin{enumerate}[leftmargin=*]
\item The message $\u$ is segmented into $K \triangleq \lceil k/\ell \rceil$ adjacent substrings of length $\ell$ each, denoted by $\u^i\in \F_2^{\ell}$, where $i\in [K]$ and $\u = \langle \u^1, \u^2, \ldots, \u^K \rangle$. Let $U_i\in \F_q$ be the $q$-ary representation of $\u^i\in \F_2^{\ell}$ in $\F_q$, with $q\triangleq 2^{\ell}$, and let $\U\triangleq (U_1,U_2,\ldots,U_K) \in \F_q^K$.\footnote{If the last substring $\u^K$ has length $\ell' < \ell$, the computation of $U_K$ assumes padding with zeros in the $\ell-\ell'$ most significant bit positions.}

\item The sequence $\U\in \F_q^K$ is encoded using an $(N, K)$ systematic Reed-Solomon (RS) code over $\F_q$, with \mbox{$N=K+c\leq q$}, where $c$ is a code parameter representing the number of redundant parity symbols. The resulting sequence is denoted by \mbox{$\X=(X_1, X_2, \ldots, X_{N}) \in \F_q^{N}$}, where $X_{K+1}, X_{K+2}\ldots, X_{N}$ are the parity symbols.

\item Let $\p=\langle \p^1, \p^2, \ldots, \p^c \rangle\in \F_2^{c\ell}$ represent the concatenated binary representation of the $c$ RS parity symbols $X_{K+1}, X_{K+2}, \ldots, X_{N}$. These parity bits undergo additional encoding using a function $f: \F_2^{c\ell} \to \F_2^{c\ell+r_f}$, wherein $f$ introduces a redundancy $r_f$ that enables the detection and/or correction of edit errors in some or all of the parity bits. Specific choices for the function $f$ are discussed in Section~\ref{discF}. The encoded parity bits $f(\p)$ are appended to $\u$ to form $\x$. Thus, the codeword $\x\in \F_2^n$ is given by $$\x = \mathsf{Enc}(\u)=\langle \u, f(\p)\rangle,$$ with $n=k+c\ell+r_f$.
\end{enumerate}

\subsection{Decoding} \label{dec}
Suppose $\x \in \F_2^n$ is affected by edit errors, resulting in a sequence of length $n'$ denoted by $\y \in \F_2^{n'}$. Let $\Delta \triangleq n - n'$ be the number of {\em net} indels in~$\y$. Define $\mathsf{Dec}: \F_2^{n'} \to \F_2^k \cup \{\varnothing\}$ as the decoding function, which is deterministic and either outputs an estimate $\hat{\u} \in \F_2^k$ of the original message or declares a decoding failure, in which case $\hat{\u} = \varnothing$.\footnote{A decoding failure refers to a detectable decoding error, indicating that the decoder acknowledges its inability to decode $\y$.}

The decoding process $\mathsf{Dec}(\y)= \hat{\u}$ employs a guess-and-check mechanism. In this process, a portion of the RS parities is used to generate guesses on $\u$, while the remaining parities are used to check the validity of these guesses. Each of the two parts of the parities serves a different function, denoted as \mbox{$\p_G \triangleq \langle \p^1, \ldots, \p^{c_1} \rangle$} for the first $c_1$ parities used to generate the guesses, and \mbox{$\p_C \triangleq \langle \p^{c_1+1}, \ldots, \p^{c} \rangle$} for the remaining \mbox{$c_2 = c - c_1$} parities used for checking the validity of the guesses. The first step of the GC+ decoder is to leverage the redundancy introduced by the parity encoding function $f$ to either: {\em (i)} Detect that the information bits are error-free and conclude decoding; or {\em (ii)} Retrieve the check parities $\p_C$ and initiate the guess-and-check process, as described next. Further elaboration on strategies to detect errors or retrieve $\p_C$ is deferred to Section~\ref{discF}. The subsequent discussion assumes the successful recovery of the check parities $\p_C$.

The following offers a high-level overview of the guess-and-check process. Let $\y'\triangleq \y_{[1,n'']}$ denote the first \mbox{$n''\triangleq k+c_1\ell+\Delta$} bits of~$\y$. Based on the value of $\Delta$, the decoder hypothesizes the {\em offsets} (i.e., net indels) in each of the $N'\triangleq K+c_1$ segments corresponding to~$\y'$. Subsequently, $\y'$ is segmented according to this hypothesis. Specifically, each substring $i\in[N']$ is segmented to a length of $\ell+\delta_i$, where $\delta_i\in \mathbb{Z}$ is the hypothesized offset. The outcome of this segmentation is decoded using the $(N',K)$ RS code punctured at the last $c_2$ positions, with the decoder taking the $q$-ary representations of segments presumed to have zero offset as input while treating the remaining segments as symbol erasures over~$\F_q$. The output of the RS decoder constitutes a {\em guess} of the information message. If this guess is consistent with the $c_2$ check parities corresponding to $\p_C$, it is validated and the decoder outputs the corresponding binary estimate $\hat{\u}$; otherwise, it proceeds to generate a new guess. If no valid estimate is obtained after all intended guesses have been processed, the decoder declares a decoding failure. A more rigorous description of this process is provided below.

A guess involves assuming a specific {\em offset pattern} \mbox{$\boldsymbol{\delta}=(\delta_1,\ldots,\delta_{N'})\in \Z^{N'}$}, where $\delta_i$ is the number of {\em net} indels in segment $i\in [N']$, and \mbox{$\sum_{i=1}^{N'} \delta_i=\Delta$}.
Given an offset pattern $\boldsymbol{\delta}$, $\y'$ is segmented into $N'$ adjacent binary substrings $\y^1,\ldots,\y^{N'}$, where the length of $\y^{i}$ is $\ell+\delta_i$, for all $i\in [N']$. For a given $\boldsymbol{\delta}$, define $\Y=( Y_1,\ldots,Y_{N'})\in \F_q^{N'}\cup \{?\}$, with
\begin{equation} \label{eqY}
  Y_i = \left\{\def\arraystretch{1.2}%
  \begin{array}{@{}c@{\quad}l@{}}
    (\y^i)_{\F_q}, &  \text{if } \delta_i=0,\\
	?, & \text{otherwise},\\
  \end{array}\right.
\end{equation}
where $?$ denotes an erasure and $(\y^i)_{\F_q}$ is the $q$-ary representation of $\y^i$ in $\F_q$. The sequence $\Y$ is decoded using the punctured $(N',K)$ RS code to obtain a guess $\hat{\U}\in \F_q^K$. Here, we consider syndrome-based RS decoder implementations capable of correcting all combinations of $e$ erasures and $s$ substitutions, provided that \mbox{$e+2s\leq N'-K=c_1$}~\cite{clark, moon2020error}. The decoder then checks whether $\hat{\U}$ is consistent with the $c_2$ parity symbols corresponding to $\p_C$. If the guess is valid, the binary equivalent $\hat{\u}$ of $\hat{\U}$ is returned and decoding is terminated; otherwise, the decoder proceeds to generate a new guess by considering a different offset pattern~$\boldsymbol{\delta}$.

Next, we explicitly define the offset patterns considered during decoding. The decoder performs either a {\em burst check} or a {\em general check}, depending on whether it is configured to correct burst errors in particular or arbitrary errors in general. Each check corresponds to processing a predefined set of offset patterns, as described below.
\subsubsection{Burst check}
The decoder investigates offset patterns associated with scenarios in which all edit errors are localized within $c_1$ consecutive segments. Specifically, it considers the set of patterns $\mathcal{P}_{\text{burst}}$ defined by
\begin{align*}
\mathcal{P}_{\text{burst}}(\Delta,N',c_1) &\triangleq \bigcup_{j=1}^{N'-c_1+1} \mathcal{P}^j(\Delta,N',c_1),  \\
\mathcal{P}^j(\Delta,N',c_1) &\triangleq \bigg\{ \boldsymbol{\delta}\in \Z^{N'} : \sum_{i\in\mathcal{I}_j} \delta_i=\Delta, \delta_i=0~\forall i \notin \mathcal{I}_j  \bigg\},
\end{align*}
where $\mathcal{I}_j\triangleq\{j,j+1,\ldots,j+c_1-1\}$. For a given~$j$, all $c_1$ segments indexed by $\mathcal{I}_j$ are treated as erasures, regardless of the individual values of $\delta_i$ for $i\in \mathcal{I}_j$. Therefore, processing a single pattern from $\mathcal{P}^j$ suffices, as all patterns in $\mathcal{P}^j$ lead to the same guess. Consequently, in the {\em burst check}, the decoder processes a total of $N'-c_1+1 = K+1$ offset patterns.

\subsubsection{General check} \label{gcheck}
The decoder examines the set of offset patterns $\mathcal{P}_{\text{general}}$ defined by 
\begin{multline*}
    \mathcal{P}_{\text{general}}(\Delta,N',c_1,\lambda)\triangleq \bigg\{ \boldsymbol{\delta}\in \Z^{N'} : \sum_{i=1}^{N'} \delta_i=\Delta, \\ \| \boldsymbol{\delta} \|_0 \leq c_1,~ \| \boldsymbol{\delta} \|_1 \leq \left \lvert \Delta \right \rvert + 2\lambda \bigg\},
\end{multline*}
where $\lambda \geq 0$ is a tunable decoding parameter that we call {\em decoding depth}. The set $\mathcal{P}_{\text{general}}$ consists of integer vectors \mbox{\( \boldsymbol{\delta} \in \mathbb{Z}^{N'} \)}whose elements sum to $\Delta$ and that satisfy two constraints: \begin{enumerate*}[label={\textit{(\roman*)}}] \item the number of non-zero elements in $\boldsymbol{\delta}$ is at most $c_1$; \item the L1 norm of $\boldsymbol{\delta}$ is at most $\Delta+2\lambda$. \end{enumerate*} 

The first constraint is imposed since any offset $\delta_i \neq 0$ implies that the corresponding symbol $Y_i$ in~\eqref{eqY} is treated as an erasure, and the $(N',K)$ RS code can correct at most $N'-K=c_1$ erasures. The second constraint, controlled by the decoding depth parameter $\lambda$, serves as a complexity cap by limiting the search space. Specifically, any vector summing to \( \Delta \) must satisfy \( \|\boldsymbol{\delta}\|_1 \geq \left \lvert \Delta \right \rvert \), and the excess $\|\boldsymbol{\delta}\|_1 - \left \lvert \Delta \right \rvert$ reflects the amount of positive-negative cancellations among the entries of $\boldsymbol{\delta}$. By restricting the L1-norm to a maximum of $\left \lvert \Delta \right \rvert+2\lambda$, we limit such cancellations to at most \( \lambda \) units in each direction, thereby capping the search space. Furthermore, the offset patterns in $\mathcal{P}_{\text{general}}$ are processed in increasing order of $\|\boldsymbol{\delta} \|_1$ in order to prioritize scenarios with fewer errors, which are typically more likely in practice.

In summary, the decoder operates as follows:
\begin{enumerate}[leftmargin=*]
\item Use the parity encoding function $f$ either to detect that the information bits are error-free or to retrieve the check parities $\p_C$ and initiate the guess-and-check process.
\item If configured for burst-error correction, perform the {\em burst check}; otherwise, perform the {\em general check}.
\item If no valid estimate $\hat{\u}$ is obtained, declare a decoding failure.
\end{enumerate}

\subsection{Code Properties} \label{secprop}
\subsubsection{Code Rate} 
The redundancy is $n-k=(c_1+c_2)\ell+r_f$, and hence the code rate is $$R=\frac{k}{n}=1-\frac{(c_1+c_2)\ell}{n}-\frac{r_f}{n}.$$




\subsubsection{Time Complexity}
We begin by discussing the encoding complexity of the $q$-ary $(N,K)$ RS code and the decoding complexity of the punctured $(N',K)$ RS code. The encoding complexity of the $(N,K)$ code, utilizing basic polynomial multiplication, is $\O(K(N-K))=\O((c_1+c_2)K)$, and the decoding complexity of the $(N',K)$ code, employing syndrome-based decoding, is $\O(N'(N'-K))=\O(c_1K+c_1^2)$~\cite{moon2020error}. For these encoding and decoding methods, the constants hidden by the $\O$-notation are small, making them efficient for short blocklength codes. For a comprehensive non-asymptotic analysis of these complexities, we refer interested readers to~\cite{chen2008complexity}. 
Furthermore, these complexities are primarily influenced by the number of multiplications in $\F_{2^{\ell}}$; thus, the bit complexities for encoding and decoding include an additional factor of $\O(\log^2(2^{\ell}))=\ell^2$. Recall that \mbox{$K = \lceil k/\ell \rceil$}, resulting in bit encoding and decoding complexities of $\O((c_1+c_2)\ell k)$ and $\O(c_1\ell k+c_1^2\ell^2)$, respectively.

Next, we analyze the encoding and decoding complexities of the GC+ code. We assume that the order of these complexities remains unaffected by the operations related to the parity encoding function $f$. This assumption holds true for all practical purposes, as we later discuss in Section~\ref{discF}. The encoding complexity is dominated by the generation of the $c=c_1+c_2$ RS parities, which is $\O((c_1+c_2)\ell k)$. On the decoding side, the complexity is dominated by the process of generating guesses, computed as the product of the total number of guesses and the complexity of the $(N',K)$ RS decoder. Thus, the worst-case decoding complexity of the {\em burst check} is $(K+1) \O(c_1\ell k+c_1^2\ell^2) = \O(c_1k^2+c_1^2\ell k)$. As for the {\em general check}, the worst-case decoding complexity is given by $\left \lvert \mathcal{P}_{\text{general}}(\Delta,N',c_1,\lambda) \right \rvert \cdot \O(c_1\ell k+c_1^2\ell^2)$, where $\mathcal{P}_{\text{general}}$ is the set of offset patterns defined in Section~\ref{dec}. In Lemma~\ref{prop1}, we provide an expression that determines the exact value of $\left \lvert \mathcal{P}_{\text{general}} \right \rvert$ in terms of $\Delta,N',c_1,$ and $\lambda$, based on combinatorics.  The proof of this lemma is given in Appendix~\ref{appA}.  

\begin{lemma} \label{prop1}
The cardinality of $\mathcal{P}_{\text{general}}(\Delta,N',c_1,\lambda)$ is given by
\begin{equation*} 
     \left \lvert \mathcal{P} \right \rvert = \sum_{i_1=0}^{c_1}  \sum_{i_2=0}^{\lambda}  \sum_{i_3=0}^{\lambda}   \binom{N'}{i_1} \binom{i_1}{i_2}\binom{i_3-1}{i_2-1} \binom{\lvert \Delta \rvert +i_3-1}{i_1-i_2-1}, 
\end{equation*} 
which is upper bounded by
\begin{equation*} 
     \left \lvert \mathcal{P} \right \rvert \le 2^{\,|\Delta|+\lambda-1}\;
i_{1}^{*}\;
\frac{(N')^{i_{1}^{*}}\,(\lambda+i_{1}^{*})^{i_{1}^{*}}}{(i_{1}^{*}!)^{2}},
\end{equation*} 
where $i_{1}^{*}\triangleq \min\{\,c_1,\ \max\{\lambda+1,\ |\Delta|+2\lambda\}\,\}$.
\end{lemma}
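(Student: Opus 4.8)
The plan is to count the patterns in $\mathcal{P}_{\text{general}}$ by first conditioning on the \emph{support size} $i_1 = \|\boldsymbol{\delta}\|_0 \in \{0,1,\ldots,c_1\}$, then on how that support splits into entries that share the sign of $\Delta$ versus those of the opposite sign, and finally enumerating the magnitudes. Concretely: there are $\binom{N'}{i_1}$ ways to choose which coordinates are nonzero. Writing $\Delta \geq 0$ without loss of generality (the case $\Delta<0$ is symmetric and only $|\Delta|$ enters), partition the $i_1$ nonzero coordinates into those carrying positive offsets and those carrying negative offsets. If the negative part has total magnitude $m$ (which I will index), then the positive part must have total magnitude $\Delta+m$, and the L1-norm constraint reads $\|\boldsymbol{\delta}\|_1 = \Delta + 2m \leq \Delta + 2\lambda$, i.e. $m \leq \lambda$. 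I would let $i_3$ index the number of negative coordinates (from $0$ to $\lambda$, since each is at least $1$ in magnitude so there are at most $m \le \lambda$ of them) and $i_2$ index... actually it is cleaner to let $i_2$ be the number of negative coordinates and let $i_3$ be their total magnitude $m$; then the positive coordinates number $i_1 - i_2$ and total $\Delta + i_3$. The number of ways to write the negative magnitudes as an ordered composition of $i_3$ into exactly $i_2$ positive parts is $\binom{i_3-1}{i_2-1}$ (stars and bars), and the number of ways to write the positive magnitudes as a composition of $\Delta+i_3$ into exactly $i_1-i_2$ positive parts is $\binom{\Delta+i_3-1}{i_1-i_2-1}$. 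Multiplying these four factors and summing over $i_1 \in [0,c_1]$, $i_2 \in [0,\lambda]$, $i_3 \in [0,\lambda]$ gives the claimed exact formula; the binomial-coefficient convention stated in the Notation section handles the degenerate endpoints ($i_1=0$, $i_2=0$, etc.) automatically, which is the one place I would be careful.

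For the upper bound, I would bound each factor in the triple sum. The key observations are: $\sum_{i_2} \binom{i_1}{i_2}\binom{i_3-1}{i_2-1} \le \binom{i_1}{\le i_1} \cdot (\text{something}) $ — more precisely, use $\sum_{i_2=0}^{\lambda}\binom{i_1}{i_2}\binom{i_3-1}{i_2-1} \leq 2^{i_1} \cdot 2^{i_3-1}$ after crudely bounding each of the two binomial sums by a power of two, but this is too lossy; instead I expect the intended route is to bound $\binom{i_1}{i_2} \le i_1^{i_2}/i_2! \le i_1^{i_1}/\cdot$ and $\binom{\Delta+i_3-1}{i_1-i_2-1}\le \binom{\Delta+\lambda-1}{i_1-i_2-1}$, then sum the last binomial over $i_2$ to get at most $2^{\Delta+\lambda-1}$, sum $\binom{i_3-1}{i_2-1}$ over $i_3$ to absorb another factor, and bound $\binom{N'}{i_1} \le (N')^{i_1}/i_1!$. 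Collecting: the $2^{\,|\Delta|+\lambda-1}$ comes from $\sum_{j}\binom{|\Delta|+\lambda-1}{j}$; the factor $(N')^{i_1^*}/i_1^*!$ from $\binom{N'}{i_1}$ with $i_1$ bounded by $i_1^*$; the factor $(\lambda+i_1^*)^{i_1^*}/i_1^*!$ from combining the $i_1^{i_2}/i_2!$ bound on $\binom{i_1}{i_2}$ with the $i_3 \le \lambda$ range; and the leading $i_1^*$ from the outermost sum over $i_1$ having at most $i_1^*+1 = O(i_1^*)$ terms, each bounded by its largest. The definition $i_1^* = \min\{c_1, \max\{\lambda+1, |\Delta|+2\lambda\}\}$ is exactly the largest support size that can actually occur: a pattern with L1-norm $\le |\Delta|+2\lambda$ and all entries nonzero has at most $|\Delta|+2\lambda$ nonzero entries, but also if $|\Delta|$ is small relative to $\lambda$ the binding constraint is subtler — hence the $\max$ with $\lambda+1$ — and of course it cannot exceed $c_1$.

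The main obstacle I anticipate is \emph{matching the constants in the upper bound exactly rather than up to a multiplicative factor}: the exact formula has three nested sums, and to land precisely on $2^{|\Delta|+\lambda-1}\, i_1^*\, (N')^{i_1^*}(\lambda+i_1^*)^{i_1^*}/(i_1^*!)^2$ one must be disciplined about which sum contributes the $2^{|\Delta|+\lambda-1}$ (it should be the hockey-stick / Vandermonde-type sum of $\binom{|\Delta|+i_3-1}{i_1-i_2-1}$ after pushing $i_3$ to its max $\lambda$ and summing freely over the remaining index), which contributes the two copies of $1/i_1^*!$ (one from $\binom{N'}{i_1}\le (N')^{i_1}/i_1!$ and one from the $\binom{i_1}{i_2}\le i_1^{i_2}/i_2!$ chain summed to $e^{i_1}$-type bounds, re-expressed as $(\lambda+i_1)^{i_1}/i_1!$ by absorbing the $i_3$-range), and why the outer $i_1$-sum costs only a linear factor $i_1^*$ rather than an exponential one (monotonicity: the summand is increasing in $i_1$ over the feasible range, so the sum is at most $(i_1^*+1)$ times the top term, and one absorbs the $+1$). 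A secondary, purely bookkeeping, obstacle is verifying that the stated binomial convention makes the exact formula correct at \emph{all} boundary cases — in particular $i_1 = 0$ (only the all-zero pattern, requiring $\Delta=0$), $i_2 = 0$ with $i_3 > 0$ (impossible, should contribute $0$, which $\binom{i_3-1}{-1}=0$ enforces), and $i_2 = i_1$ (no positive coordinates, requiring $\Delta + i_3 = 0$, handled by $\binom{\Delta+i_3-1}{-1}$). I would check these explicitly before declaring the exact formula established, then treat the upper bound as the routine (if fiddly) estimation described above.
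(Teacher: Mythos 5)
Your proposal is correct and follows essentially the same route as the paper: the exact count comes from the identical decomposition (choose the support of size $i_1$, deduce from the L1 constraint that the negative part has total magnitude at most $\lambda$, index the number $i_2$ and total magnitude $i_3$ of negative coordinates, and count ordered compositions via $\binom{i_3-1}{i_2-1}$ and $\binom{|\Delta|+i_3-1}{i_1-i_2-1}$, with the stated binomial convention handling the boundary cases you flag). Your sketch of the upper bound also matches the paper's chain — truncate $i_1$ at $i_1^*$ by feasibility, push $i_3$ to $\lambda$ by monotonicity, apply $\binom{a}{b}\le 2^a$, collapse the $i_3$-sum by hockey-stick and the $i_2$-sum by Vandermonde to $\binom{i_1+\lambda}{i_1}$, bound both remaining binomials by $n^k/k!$, and pay a linear factor $i_1^*$ for the outer sum — so the only work left is executing the Vandermonde step you correctly identified rather than the lossier $i_1^{i_2}/i_2!$ route you first floated.
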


It is important to note that since decoding terminates upon finding a valid guess, depending on the underlying random edit error model, the average-case decoding complexity can be significantly lower than the worst-case scenario. The patterns in the {\em general check} are processed in increasing order of $\|\boldsymbol{\delta}\|_1$ to improve the average-case decoding time, assuming that fewer errors are more probable in the underlying channel. 

\subsubsection{Error Correction Capability}
The edit error correction capability of the GC+ decoder at the bit level stems from the erasure and substitution correction capability of the RS code at the $q$-ary level. Specifically, for a given decoder input~$\y'$ and a guess parameterized by an offset pattern $\boldsymbol{\delta}$, the edit errors in $\y'$ can be corrected if $e+2s\leq c_1$, where $e$ and $s$ denote the number of erasures and substitutions, respectively, in $\Y$~(defined in~\eqref{eqY}).  Two sources contribute to the possibility of a decoding error: {\em (i)} A miscorrection, i.e., an {\em undetectable} decoding error, which results from a spurious guess, producing an incorrect estimate $\hat{\u}$ that accidentally aligns with the $c_2$ check parities. {\em (ii)} A decoding failure, i.e., a {\em detectable} decoding error, which occurs when none of the guesses yield a valid estimate, indicating either that the bit-level edit error combination exceeds the error correction capability of the $q$-ary RS code, or that the actual offset pattern was not covered during the guess-and-check process. The probability of a miscorrection is primarily influenced by the value of $c_2$, while the probability of a decoding failure depends on $c_1$, $\lambda$, and the channel parameters. 

The {\em burst check} covers all cases corresponding to scenarios where the edit errors affect at most $c_1$ consecutive segments. This includes cases of burst or localized edit errors, where edit errors of arbitrary type and quantity occur within a window of size $(c_1-1)\ell$. The reasoning extends similarly to the {\em general check}, which covers a broader spectrum of edit error scenarios while leveraging the ability of RS codes to simultaneously correct erasures and substitutions. In Section~\ref{th:analysis}, we provide a theoretical analysis on the overall probability of decoding error of GC+ codes under the random channel model described in Section~\ref{model}. 

\subsection{Choice of the code parameters}\label{sec:choice}
The tunable encoding and decoding parameters of the GC+ code are the segmentation length for encoding $\ell$, number of RS parities for guessing $c_1$, number of RS parities for checking $c_2$, and decoding depth in the {\em general check} $\lambda$. 


The choice of the value of $\ell$ presents a trade-off between the redundancy introduced by the RS code, given by $(c_1+c_2)\ell$, and the decoding complexity which depends on the number of segments $K=\lceil k/\ell \rceil$. Ideally, we seek to minimize $\ell$ to reduce redundancy. However, selecting a small value of $\ell$ poses the following challenges: {\em (i)} The increase in the number of segments $K$ requires the decoder to process a larger number of guesses, leading to increased decoding complexity. {\em (ii)} The RS code imposes the condition $q\geq K+c_1+c_2$, implying $2^{\ell}\geq \lceil k/\ell \rceil + c_1+c_2$, which could be violated for small~$\ell$. A typical choice for $\ell$ is $\ell=\lfloor \log k \rfloor$ as it maintains low redundancy and gives a good trade-off between the code properties; however, higher values of $\ell$ may be considered based on specific application requirements. 

The choice of $c_1$ primarily depends on the desired level of error correction. It can be adapted to the specific error model and application to achieve suitable trade-offs between redundancy and decoding failure rates. The parameter $\lambda$ affects decoding complexity. We typically opt for small values of $\lambda$ to strike a balance between decoding complexity and the probability of decoding failure. The value of $\lambda$ could also be customized according to the number of {\em net} indels $\Delta$, where, for example, instances with $|\Delta|=j$ can be decoded with depth $\lambda_j$. Lastly, the choice of $c_2$ mainly influences the probability of miscorrections. For short blocklength codes, small values of $c_2$ suffice to maintain a very low probability of such errors.

\subsection{Parity encoding function} \label{discF}
As previously mentioned, the purpose of the parity encoding function $f$ is to detect or correct edit errors in the parities $\p =\langle \p_G, \p_C \rangle$. More precisely, we are primarily interested in ``protecting'' the check parities $\p_C$ as their accurate recovery is vital for carrying out the guess-and-check process outlined in Section~\ref{dec}. On the other hand, errors within the guess parities $\p_G$ are implicitly addressed as part of the guess-and-check process. As discussed in the previous section, for typical values of the code parameters such as $\ell=\log k$ and small $c_2$, the length of the check parities $c_2\ell$ is relatively short compared to the information sequence. Thus, we can afford to encode these parities with certain codes that might be inefficient for longer lengths in terms of rate or complexity. Furthermore, for specific types of errors, such as burst or localized errors, it is possible to set up $f$ in a way that the errors impact either the information bits or the parities, but not both. In addition, $f$ can be designed to enable the detection of which of these two cases actually occurred. In the following, we present a non-exhaustive list of possibilities for selecting the parity encoding function $f$.

\subsubsection{Repetition code} The check parities can be encoded using a $t$-repetition code, where each bit is repeated $t$ times, i.e., $f(\p_G,\p_C)=\langle \p_G, \mathsf{rep}_{t}(\p_C)\rangle$. At the decoder, one way to recover the check parities $\p_C$ is to take the majority vote in each window of size $t$ in $\y_{[n'-tc_2\ell,n']}$, where $\y$ is the decoder input defined in Section~\ref{dec}. Then, as previously explained, the guess-and-check process is applied over $\y_{[1,k+c_1\ell+\Delta]}$. The redundancy incurred by the repetition code is $r_f=(t-1)c_2\ell$, resulting in an overall redundancy of $n-k=(c_1+tc_2)\ell$. The impact of the repetition code on the overall encoding and decoding complexity is negligible.

\subsubsection{Exhaustive search} 
Protecting the check parities with significantly less redundancy than repetition codes is possible using methods based on exhaustive search and lookup tables. In general, these methods have exponential time complexity. However, when applied to short inputs, such as the check parities with \mbox{$\ell= \log k$} and small~$c_2$, the complexity remains polynomial in~$k$. One such approach is to construct a code with minimum {\em sequence} Levenshtein distance (SLD), denoted by $d^{\text{SLD}}_{\text{min}}$, via exhaustive search. The so-called SLD is a variant of the standard Levenshtein distance (LD), where the SLD between two strings is defined as the minimum number of deletions, insertions, and substitutions required to make one string a {\em prefix} or {\em suffix} of the other~\cite{buschmann2013levenshtein}. Clearly, $d^{\text{SLD}}_{\text{min}} \leq d^{\text{LD}}_{\text{min}}$, implying that codes designed based on minimum SLD are subject to stricter constraints than those based on LD. 

These additional constraints enable the correction of up to $\tilde{t} \triangleq \left \lfloor (d^{\text{SLD}}_{\text{min}} - 1)/2 \right \rfloor$ edit errors in the check parities, even when their exact boundary within $\boldsymbol{y}$ are unknown. Specifically, suppose the $\tilde{k} = c_2\ell$ check parity bits are encoded using an $(\tilde{n}, \tilde{k})$ code with minimum suffix-SLD $d^{\text{SLD}}_{\text{min}}$, and appended as a suffix to form part of the GC+ codeword. Then, by applying minimum distance decoding to the last $\tilde{n}$ bits of $\boldsymbol{y}$, the decoder can correct up to $\tilde{t}$ edit errors and recover the check parities with zero error, even if these $\tilde{n}$ bits contain ``foreign'' bits not originating from the check parities due to offsets caused by deletions or insertions. 

The resulting redundancy of the GC+ code in this scenario is $n - k = c_1\ell + \tilde{n}$. As for time complexity, the (offline) encoding complexity of this method is \(\mathcal{O}(2^{2\tilde{k}} \cdot \tilde{n}^2)\), since constructing such a code via exhaustive search requires computing the SLD between all codeword pairs, each costing \(\mathcal{O}(\tilde{n}^2)\). This yields \(\mathcal{O}(k^{2c_2} \cdot \tilde{n}^2)\) when \(\tilde{k} = c_2 \log k\). The decoding complexity is \(\mathcal{O}(2^{\tilde{k}} \cdot \tilde{n}^2)\), as the distance between the \(\tilde{n}\) bits and all \(2^{\tilde{k}}\) codewords must be computed for minimum distance decoding, which gives \(\mathcal{O}(k^{c_2} \cdot \tilde{n}^2)\) under the same assumptions.


\subsubsection{Buffer} \label{buff}
Suppose that the edit errors are localized within any window of $w<n$ consecutive bit positions, where the location of the window is unknown to the decoder but its size is known. Then, it is possible to insert a buffer between the information and parity bits to achieve the following: {\em (i)}~Ensure that edit errors cannot simultaneously affect the information and parity bits. {\em (ii)}~Enable the detection of whether edit errors have affected the information bits or not in all cases where $\Delta \neq 0$.  Recent works~\cite{hanna2021codes, bitar2021optimal, tang2023correcting} have explored such buffers. Following~\cite[Lemma 17]{tang2023correcting}, the buffer \mbox{$\b\triangleq \langle \boldsymbol{1}^{w+1}, \boldsymbol{0}^{w+1}, \boldsymbol{1}^{w+1}\rangle$} of length \mbox{$3(w+1)$} serves the intended purpose. The parity encoding function is then defined as \mbox{$f(\p_G, \p_C) = \langle \b, \p_G, \p_C \rangle$}, and the decoder uses this buffer to either directly output the error-free information bits or, if they are affected, rely on the error-free parities to initiate the {\em burst check}. Here, decoding is simplified by fixing $\p_G$ throughout the guess-and-check process and operating over $K$ segments instead of $K + c_1$. Note that the buffer is ineffective when $\Delta = 0$, in which case the decoder discards the buffer positions in $\boldsymbol{y}$ and passes the remaining bits to the RS decoder to correct substitutions. The overall redundancy in this scenario is $n - k = (c_1 + c_2)\ell + 3(w+1)$, and the impact of $f$ on the time complexity of the GC+ code is negligible. 

\section{Performance Analysis} \label{sec:res}
\subsection{Channel Model} \label{model} 
In this section, we evaluate the error-correction capability of the GC+ code under the following channel model. Let $\x \in \Sigma^{n}$ and $\y \in \Sigma^{n'}$ be the input and output of the channel, respectively, where $\Sigma$ denotes the alphabet. Each symbol in $\x_{[i,i+w-1]}$ is independently edited with probability $P_{\text{edit}}$, where $i$ is sampled uniformly at random from $\{1,2,\ldots,n-w+1\}$ and $w\leq n$ is a channel parameter representing the size of the window where the errors are localized. All symbols in $\x_{[1,i-1]}$ and $\x_{[i+w,n]}$ are retained. The probabilities of deletion, insertion, and substitution are denoted by $P_d$, $P_i$, and $P_s$, respectively, with \mbox{$P_{\text{edit}}=P_d+P_i+P_s$}. For an input symbol~$x$, the output symbol $y$ of the channel is determined as follows: if $x$ is deleted, the output $y$ is an empty string; if $x$ experiences an insertion, then $y=\langle \sigma, x \rangle$ where $\sigma$ is chosen uniformly at random from $\Sigma$; if $x$ is substituted, then $y=\Tilde{x}$, where $\Tilde{x}$ is chosen uniformly at random from $\Sigma \setminus x$. Thus, the channel is characterized by the parameters $(w,n,P_{\text{edit}},P_d,P_i,P_s)$, and under this model, we have $|n-n'|\leq n$. The average edit rate introduced by this channel, denoted by $\varepsilon_{av}$, is given by $\varepsilon_{av}=P_{\text{edit}}\times \frac{w}{n}$.

Note that the case $w=n$ reduces to the scenario of i.i.d. edits, with $\varepsilon_{av}=P_{\text{edit}}$. In our theoretical analysis in Section~\ref{th:analysis}, we focus on this case to analytically evaluate the decoding performance of the GC+ code under arbitrarily located edits. In our simulation results in Section~\ref{simul}, we study both scenarios: the case $w<n$ with high values of $P_{\text{edit}}$ to simulate burst or localized edits, and the case $w=n$ with lower values of $P_{\text{edit}}$ to model scenarios with edits distributed arbitrarily over the entire sequence. 

\subsection{Theoretical Analysis} \label{th:analysis}
\subsubsection{Definitions and preliminaries} \label{defprem}
Recall that both decoding failures and miscorrections contribute to the overall probability of decoding error. Thus, for a given information message $\boldsymbol{u} \in \mathbb{F}_2^k$, we define the decoding error event as 
\begin{equation}
\mathcal{E} \triangleq \{\hat{\Rv{U}} = \varnothing \} \cup \{\hat{\Rv{U}} \neq \boldsymbol{u} \land \hat{\Rv{U}} \neq \varnothing\},
\end{equation}
where $\hat{\Rv{U}}\in \F_2^k \cup \{\varnothing\}$ is the random variable representing the estimated message. Since the GC+ decoder is deterministic, the randomness in $\hat{\Rv{U}}$ for a given $\u$ arises solely from the channel model described in Section~\ref{model}.

Let \( \mathsf{N}^{\mathsf{del}}_i \), \( \mathsf{N}^{\mathsf{ins}}_i \), and \( \mathsf{N}^{\mathsf{sub}}_i \) denote the random variables representing the respective number of deletions, insertions, and substitutions in segment \( i \in [N'] \) of the GC\texttt{+} codeword, where \( N' = K + c_1 \) is the total number of segments of length \( \ell \) bits to which the guess-and-check process is applied. According to the channel model, we have $\mathsf{N}^{\mathsf{del}}_i + \mathsf{N}^{\mathsf{ins}}_i + \mathsf{N}^{\mathsf{sub}}_i \in [\ell]$.

Next, we define three “problematic” events, each of which is highly likely to cause a decoding error if realized during the {\em general check} decoding step described in Section~\ref{gcheck}.%
%

For each $i \in [N']$, we define the random variable
\begin{equation} \label{defE}
\rv{E}_i \triangleq 
\begin{cases}
0 & \text{if } \mathsf{N}^{\mathsf{del}}_i = \mathsf{N}^{\mathsf{ins}}_i = \mathsf{N}^{\mathsf{sub}}_i = 0, \\
1 & \text{if } \mathsf{N}^{\mathsf{del}}_i \neq \mathsf{N}^{\mathsf{ins}}_i , \\
2 & \text{otherwise},
\end{cases}
\end{equation}
and let $\tilde{\rv{E}} \triangleq \sum_{i=1}^{N'} \rv{E}_i$. Here, $\rv{E}_i=0$ indicates that segment~$i$ is error-free, $\rv{E}_i=1$ indicates a non-zero offset in segment~$i$, and $\rv{E}_i=2$ indicates the presence of errors in segment~$i$ that do not induce any offset. From the perspective of the GC+ decoder, these three scenarios correspond, respectively, to $Y_i$ in the sequence $\Y$ defined in~\eqref{eqY} being error-free ($\rv{E}_i=0$), erased ($\rv{E}_i=1$), or substituted ($\rv{E}_i=2$). Since the underlying RS code can correct $e$ erasures and $s$ substitutions in~$\Y$ if and only if \mbox{$e+2s\leq c_1$}, where $c_1$ is the number of guess parities, error patterns satisfying $\tilde{\rv{E}} > c_1$ lie beyond the error correction capability of the RS code. Therefore, we define the first problematic event as 
\begin{equation}
\mathcal{E}_1 \triangleq \{\tilde{\rv{E}} > c_1\}.
\end{equation}

Let $\Rv{D}=(\rv{D}_1,\rv{D}_2,\ldots,\rv{D}_{N'})$ be the random vector representing the segment-level offsets, where 
\begin{equation}
\mathsf{D}_i \triangleq \mathsf{N}^{\mathsf{ins}}_i - \mathsf{N}^{\mathsf{del}}_i \in [-\ell, \ell],
\end{equation}
and let $\tilde{\rv{D}} \triangleq \sum_{i=1}^{N'} \rv{D}_i$. As previously explained in Section~\ref{gcheck}, the offset patterns examined by the GC+ decoder during the general check satisfy the constraint $\| \boldsymbol{\delta} \|_1 \leq \left \lvert \Delta \right \rvert + 2\lambda$, where $\lambda$ is the decoding depth parameter. Hence, if the actual offset pattern induced by the channel violates this constraint, it will fall outside the search space of the GC+ decoder. We therefore define the second problematic event as
\begin{equation}
\mathcal{E}_2 \triangleq \big\{ \| \Rv{D} \|_1 >  \lvert \tilde{\rv{D}}  \rvert + 2\lambda(\tilde{\rv{D}}) \big\},
\end{equation}
where the dependence of $\lambda$ on $\tilde{\rv{D}}$ captures the general case discussed in Section~\ref{sec:choice}, in which different decoding depths may be considered for each realization $\Delta$ of $\tilde{\rv{D}}$.

Recall that the first decoding step involves recovering the check parities, denoted by $\p_C$, based on the parity encoding function. Let $\hat{\Rv{P}}_C$ be the random variable representing the estimated check parities. As previously explained, accurate recovery of the check parities is crucial for the guess-and-check process, as all guesses are validated or discarded based on $\p_C$. Hence, for a given  $\p_C$, we define the third problematic event as
 \begin{equation}
 \mathcal{E}_3 \triangleq \{ \hat{\Rv{P}}_C \neq \p_C \}.
 \end{equation}

\subsubsection{Main result}
Theorem~\ref{thm2} bounds the decoding error probability of the GC+ code in terms of the three events, $\mathcal{E}_1$, $\mathcal{E}_2$, and $\mathcal{E}_3$. Although the inequality follows from a standard union bound argument, its significance lies in the decomposition it provides: the overall error probability is broken down into components tied to distinct parts of the decoding process. This formulation serves as the basis for the subsequent analysis, where each event probability is evaluated separately (Sections~\ref{err1}–\ref{err3}) and later combined with numerical results to demonstrate the effectiveness of the bound.

\begin{theorem} \label{thm2}
The probability of decoding error of the GC+ code satisfies
\begin{equation*} \label{eq5}
\Pr(\mathcal{E}) \leq \Pr(\mathcal{E}_1) + \Pr(\mathcal{E}_2) + \Pr(\mathcal{E}_3) + \Pr(\mathcal{E} \mid \bar{\mathcal{E}}_1\cap \bar{\mathcal{E}}_2\cap \bar{\mathcal{E}}_3).
\end{equation*}
\end{theorem}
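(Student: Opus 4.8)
The plan is to obtain the bound purely by elementary probability manipulations — the law of total probability, De Morgan's laws, and the union bound — without invoking any structural property of the GC+ code beyond the definitions of $\mathcal{E},\mathcal{E}_1,\mathcal{E}_2,\mathcal{E}_3$ already given. Write $\mathcal{G}\triangleq\mathcal{E}_1\cup\mathcal{E}_2\cup\mathcal{E}_3$ for the union of the three problematic events. First I would partition the decoding error probability according to whether $\mathcal{G}$ occurs:
\[
\Pr(\mathcal{E}) = \Pr(\mathcal{E}\cap\mathcal{G}) + \Pr(\mathcal{E}\cap\bar{\mathcal{G}}).
\]

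For the first term, bound $\Pr(\mathcal{E}\cap\mathcal{G})\leq\Pr(\mathcal{G})$ and then apply the union bound to get $\Pr(\mathcal{G})\leq\Pr(\mathcal{E}_1)+\Pr(\mathcal{E}_2)+\Pr(\mathcal{E}_3)$. For the second term, De Morgan's law gives $\bar{\mathcal{G}}=\bar{\mathcal{E}}_1\cap\bar{\mathcal{E}}_2\cap\bar{\mathcal{E}}_3$, and then
\[
\Pr(\mathcal{E}\cap\bar{\mathcal{G}}) = \Pr(\mathcal{E}\mid\bar{\mathcal{G}})\,\Pr(\bar{\mathcal{G}}) \leq \Pr(\mathcal{E}\mid\bar{\mathcal{G}}) = \Pr(\mathcal{E}\mid\bar{\mathcal{E}}_1\cap\bar{\mathcal{E}}_2\cap\bar{\mathcal{E}}_3),
\]
since $\Pr(\bar{\mathcal{G}})\leq 1$. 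Adding the two bounds yields exactly the claimed inequality.

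The only point that calls for a word of care is the degenerate case $\Pr(\bar{\mathcal{E}}_1\cap\bar{\mathcal{E}}_2\cap\bar{\mathcal{E}}_3)=0$, in which the conditional probability on the right-hand side is undefined; there $\Pr(\mathcal{E}\cap\bar{\mathcal{G}})=0$ as well, so the inequality holds trivially, and otherwise one may simply assume this event has positive probability, as it does for any non-degenerate choice of code and channel parameters. Beyond this, there is no genuine obstacle here: the substance of the analysis is not in this union bound but in subsequently controlling each of $\Pr(\mathcal{E}_1)$, $\Pr(\mathcal{E}_2)$, $\Pr(\mathcal{E}_3)$, and the residual conditional term $\Pr(\mathcal{E}\mid\bar{\mathcal{E}}_1\cap\bar{\mathcal{E}}_2\cap\bar{\mathcal{E}}_3)$, which is what the subsequent sections are devoted to.
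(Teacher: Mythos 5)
Your proposal is correct and follows essentially the same argument as the paper: decompose $\Pr(\mathcal{E})$ over the occurrence of $\mathcal{E}_1\cup\mathcal{E}_2\cup\mathcal{E}_3$, apply the union bound to the first term and bound $\Pr(\bar{\mathcal{E}}_1\cap\bar{\mathcal{E}}_2\cap\bar{\mathcal{E}}_3)$ by $1$ in the second. The minor variation of bounding $\Pr(\mathcal{E}\cap\mathcal{G})\le\Pr(\mathcal{G})$ before union-bounding (rather than union-bounding $\mathcal{E}\cap\mathcal{G}$ directly) is immaterial, and your remark on the degenerate conditioning case is a fair, if pedantic, addition.
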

\begin{proof}
Let $\mathcal{A}\triangleq \{ \mathcal{E}_1 \cup \mathcal{E}_2 \cup \mathcal{E}_3\}$. Then, by the law of total probability, we have
\begin{equation} \label{eqp1}
\Pr(\mathcal{E}) = \Pr(\mathcal{E}\cap \mathcal{A}) + \Pr(\mathcal{E}\cap \bar{\mathcal{A}}).
\end{equation}
Since $\mathcal{E} \cap \mathcal{A} = (\mathcal{E}\cap \mathcal{E}_1) \cup (\mathcal{E}\cap \mathcal{E}_2) \cup (\mathcal{E}\cap \mathcal{E}_3)$, the union bound gives
\begin{align} \label{eqp2}
\Pr(\mathcal{E}\cap \mathcal{A}) \leq \sum_{j=1}^3 \Pr(\mathcal{E}\cap \mathcal{E}_j) = \sum_{j=1}^3 \Pr(\mathcal{E} \mid \mathcal{E}_j) \Pr(\mathcal{E}_j).
\end{align}
Since the events $\mathcal{E}_j$, $j=1,2,3$, are problematic for decoding due to the aforementioned reasons, we expect the trivial bound $\Pr(\mathcal{E} \mid \mathcal{E}_j)\leq 1$ to be effective. Hence, we write
\begin{align} \label{eqp3}
\Pr(\mathcal{E}\cap \mathcal{A}) \leq \Pr(\mathcal{E}_1) + \Pr(\mathcal{E}_2) + \Pr(\mathcal{E}_3).
\end{align}
Since $\mathcal{E} \cap \bar{\mathcal{A}} = \mathcal{E}\cap \bar{\mathcal{E}}_1 \cap \bar{\mathcal{E}}_2 \cap \bar{\mathcal{E}}_3$, we also have
\begin{equation} \label{eqp4}
\Pr(\mathcal{E}\cap \bar{\mathcal{A}}) = \Pr(\mathcal{E} \mid \bar{\mathcal{E}}_1\cap \bar{\mathcal{E}}_2\cap \bar{\mathcal{E}}_3) \Pr(\bar{\mathcal{E}}_1\cap \bar{\mathcal{E}}_2\cap \bar{\mathcal{E}}_3). 
\end{equation}
The result in Theorem~\ref{thm2} therefore follows from~\eqref{eqp1}, \eqref{eqp3}, \eqref{eqp4}, and the fact that $\Pr(\bar{\mathcal{E}}_1\cap \bar{\mathcal{E}}_2\cap \bar{\mathcal{E}}_3) \leq 1$. 
\end{proof}
The event $\bar{\mathcal{E}}_1 \cap \bar{\mathcal{E}}_2 \cap \bar{\mathcal{E}}_3$ in the conditional probability term of Theorem~\ref{thm2} corresponds to scenarios where: {\em (i)}~the check parities are correctly recovered, {\em (ii)}~the actual offset pattern induced by the channel lies within the search space of the GC+ decoder, and {\em (iii)}~the number of erasures and substitutions in $\Y$ are within the error correction capability of the RS code. Intuitively, decoding errors are unlikely when these three events occur simultaneously. However, a non-zero probability of decoding error remains due to spurious guesses that may coincidentally match with the check parities. This probability can nonetheless be made arbitrarily small by choosing a sufficiently large number of check parities $c_2$ to avoid, with high probability, such spurious matches.\footnote{For short blocklengths, small values of $c_2$, such as $c_2=1$ or $2$ are sufficient to avoid such spurious matches with high probability.} Therefore, for all practical purposes, we assume that the term $\Pr(\mathcal{E} \mid \bar{\mathcal{E}}_1\cap \bar{\mathcal{E}}_2\cap \bar{\mathcal{E}}_3)$ in Theorem~\ref{thm2} is negligible and write
\begin{equation} \label{eq6}
\Pr(\mathcal{E}) \lessapprox \Pr(\mathcal{E}_1) + \Pr(\mathcal{E}_2) + \Pr(\mathcal{E}_3).
\end{equation}
In what follows, we analyze the probabilities of the events $\mathcal{E}_1$, $\mathcal{E}_2$, and $\mathcal{E}_3$ to evaluate~\eqref{eq6}. In Section~\ref{simul}, we provide numerical results demonstrating that the empirical probability of decoding error of the GC+ code can be closely approximated analytically using~\eqref{eq6}.

\subsubsection{Probability of event $\mathcal{E}_1$} \label{err1}
It follows from the channel model and~\eqref{defE} that $\rv{E}_1, \rv{E}_2,\ldots,\rv{E}_{N'}$ are i.i.d. random variables. Define $\alpha_0 \triangleq \Pr(\rv{E}_i = 0)$, $\alpha_1 \triangleq \Pr(\rv{E}_i = 1)$, and \mbox{$\alpha_2 \triangleq \Pr(\rv{E}_i = 2)$}. Lemma~\ref{lemm2} expresses the probability of \mbox{$\mathcal{E}_1 \;=\;\{\tilde{\rv{E}} > c_1\}$} in terms of the code and channel parameters.
\begin{lemma} \label{lemm2}
The probability of $\mathcal{E}_1$ is given by
\[
\Pr(\mathcal{E}_1) = 1-\hspace{-0.4cm} \sum_{j_1+2j_2\leq c_1} \hspace{-0.15cm} \binom{N'}{j_1,j_2,N'-j_1-j_2} \alpha_1^{j_1}  \alpha_2^{j_2} \alpha_0^{N'-j_1-j_2}, 
\]
where $j_{1}\in \big[0, \, c_1\big]$, \mbox{$j_{2}\in \big[0,\, \left \lfloor (c_{1}-j_{1})/2 \right \rfloor \big]$}, and
\begin{align*}
\alpha_0 &= \left(1-P_d-P_i-P_s\right)^{\ell}, \\
\alpha_1 &= 1 - \sum_{j=0}^{\left \lfloor \ell/2 \right \rfloor} \binom{\ell}{j,j,\ell-2j} (P_d)^j (P_i)^j (1-P_d-P_i)^{\ell-2j}, \\
\alpha_{2} &= 1-\alpha_{0}-\alpha_{1},
\end{align*}
\end{lemma}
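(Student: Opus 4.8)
The plan is to prove Lemma~\ref{lemm2} by first establishing the claimed expression for $\Pr(\mathcal{E}_1)$ via a multinomial argument, and then deriving the closed forms for $\alpha_0$, $\alpha_1$, and $\alpha_2$ from the channel model. The starting observation, already noted in the excerpt, is that under the i.i.d. channel model ($w=n$) the per-segment triples $(\rv{N}^{\mathsf{del}}_i,\rv{N}^{\mathsf{ins}}_i,\rv{N}^{\mathsf{sub}}_i)$ are independent across $i$ and identically distributed, so the random variables $\rv{E}_1,\ldots,\rv{E}_{N'}$ defined in~\eqref{defE} are i.i.d. with common distribution $(\alpha_0,\alpha_1,\alpha_2)$ on $\{0,1,2\}$. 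Since $\tilde{\rv E}=\sum_{i=1}^{N'}\rv E_i$, the complement event $\bar{\mathcal E}_1=\{\tilde{\rv E}\le c_1\}$ can be written by conditioning on the number $j_1$ of segments with $\rv E_i=1$ and the number $j_2$ of segments with $\rv E_i=2$: the contribution of such a configuration to $\tilde{\rv E}$ is exactly $j_1+2j_2$, so the constraint is $j_1+2j_2\le c_1$, which forces $j_1\in[0,c_1]$ and $j_2\in[0,\lfloor(c_1-j_1)/2\rfloor]$. Counting the number of ways to assign the roles to the $N'$ segments gives the multinomial coefficient $\binom{N'}{j_1,j_2,N'-j_1-j_2}$, and each configuration has probability $\alpha_1^{j_1}\alpha_2^{j_2}\alpha_0^{N'-j_1-j_2}$; summing and taking the complement yields the displayed formula for $\Pr(\mathcal E_1)$.

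The second part is to compute $\alpha_0$, $\alpha_1$, $\alpha_2$ from the single-symbol channel transition rules. For $\alpha_0=\Pr(\rv E_i=0)$ we need every one of the $\ell$ symbols in segment $i$ to be untouched (no deletion, insertion, or substitution); by independence of the per-symbol edits in the window and $P_{\text{edit}}=P_d+P_i+P_s$, this is $(1-P_d-P_i-P_s)^\ell$. For $\alpha_1=\Pr(\rv E_i=1)=\Pr(\rv N^{\mathsf{del}}_i\neq\rv N^{\mathsf{ins}}_i)$, I would instead compute the complementary probability $\Pr(\rv N^{\mathsf{del}}_i=\rv N^{\mathsf{ins}}_i)$: a symbol in the segment is deleted with probability $P_d$, suffers an insertion with probability $P_i$, and ``neither of those two'' (i.e., is retained or merely substituted, which does not change the length contribution) with probability $1-P_d-P_i$; grouping the $\ell$ symbols into $j$ deletions, $j$ insertions, and $\ell-2j$ of the remaining type, and summing over $j\in[0,\lfloor\ell/2\rfloor]$, gives $\sum_{j=0}^{\lfloor \ell/2\rfloor}\binom{\ell}{j,j,\ell-2j}P_d^j P_i^j (1-P_d-P_i)^{\ell-2j}$, so $\alpha_1$ is $1$ minus this sum. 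Finally $\alpha_2=1-\alpha_0-\alpha_1$ since $\{\rv E_i=0,1,2\}$ partition the sample space.

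The main subtlety to handle carefully is the definition of $\rv E_i$ and exactly which per-symbol outcomes fall into which bucket. The event $\{\rv E_i=1\}$ is ``non-zero offset,'' i.e., $\rv N^{\mathsf{ins}}_i\neq \rv N^{\mathsf{del}}_i$, and it is important to note that this set of outcomes is disjoint from $\{\rv E_i=0\}$ (which additionally requires no substitutions) and that $\{\rv E_i=2\}$ is ``some error but zero offset'' ($\rv N^{\mathsf{ins}}_i=\rv N^{\mathsf{del}}_i$ but not all three counts zero). The only slightly delicate point in the $\alpha_1$ computation is that for the purpose of the offset count we only care about the deletion/insertion status of each symbol and may treat substitution as ``not affecting length,'' which is why the three per-symbol categories in the multinomial are $\{$deleted$\}$, $\{$inserted$\}$, $\{$retained or substituted$\}$ with probabilities $P_d$, $P_i$, $1-P_d-P_i$ — the substitution probability $P_s$ gets absorbed into the last category. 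I expect no serious obstacle beyond bookkeeping; the one place to be explicit is justifying that under the $w=n$ model the per-symbol edit events are mutually independent (so that segment-level and within-segment factorizations are both legitimate), after which the multinomial expansions are routine. I would also briefly remark that each segment has length exactly $\ell$ (the last segment's zero-padding convention from the encoding does not alter the symbol count), so $N'=K+c_1$ i.i.d. segments is the correct model.
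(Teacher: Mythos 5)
Your proposal is correct and follows essentially the same route as the paper's proof: the same i.i.d.\ segment observation, the same multinomial count over $(j_1,j_2)$ with $\tilde{\rv E}=j_1+2j_2\le c_1$, and the same reduction of $\alpha_1$ to the complementary event $\{\mathsf{N}^{\mathsf{del}}_i=\mathsf{N}^{\mathsf{ins}}_i\}$ via a three-category multinomial with categories deleted/inserted/other. Your explicit remark that $P_s$ is absorbed into the third category with probability $1-P_d-P_i$ is in fact a slightly cleaner statement of the marginalization the paper performs implicitly.
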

\begin{proof}
The event $\{\rv{E}_i=0\}$ corresponds to the case where no errors occur in a segment of length $\ell$. Since the channel introduces i.i.d. edit errors, it follows that $\alpha_0=(1-P_{\text{edit}})^{\ell}$, where $P_{\text{edit}}=P_d+P_i+P_s$. For the event $\{\rv{E}_i=1\}$, we have
$$\alpha_1=\Pr(\mathsf{N}^{\mathsf{del}}_i \neq \mathsf{N}^{\mathsf{ins}}_i)=1-\Pr(\mathsf{N}^{\mathsf{del}}_i = \mathsf{N}^{\mathsf{ins}}_i).$$
From the channel model, the vector $(\mathsf{N}^{\mathsf{del}}_i, \mathsf{N}^{\mathsf{ins}}_i, \mathsf{N}^{\mathsf{sub}}_i)$ follows a multinomial distribution with parameters $(\ell\,;\,P_d,P_i,P_s)$. Therefore, the probability that the number of deletions equals the number of insertions is obtained by summing the multinomial probabilities over all realizations \((j,j,\ell-2j)\) for $j \in [0, \lfloor \ell/2 \rfloor]$. Subtracting this sum from $1$ gives $\alpha_1$, and consequently, $\alpha_2=1-\alpha_0-\alpha_1$.

To evaluate the probability of $\mathcal{E}_1$, define the count vector $$(\rv{J}_0,\rv{J}_1,\rv{J}_2)
 = \bigl(\left \lvert \{i:\rv{E}_i=0\} \right \rvert,
        \left \lvert \{i:\rv{E}_i=1\} \right \rvert,
        \left \lvert \{i:\rv{E}_i=2\} \right \rvert \bigr), $$
so that $\tilde{\rv{E}}=\rv{J}_1+2\rv{J}_2$ and $\rv{J}_0+\rv{J}_1+\rv{J}_2=N'$. Because $\rv{E}_1, \rv{E}_2,\ldots,\rv{E}_{N'}$ are i.i.d., the vector $(\rv{J}_0,\rv{J}_1,\rv{J}_2)$ follows a multinomial distribution with parameters $(N'\,;\,\alpha_0,\alpha_1,\alpha_2)$. Summing the multinomial probabilities over all realizations \((j_0,j_1,j_2)\) that satisfy $j_1+2j_2\leq c_1$ and $j_0+j_1+j_2= N'$ yields
\(\Pr(\tilde{\rv{E}}\le c_1)\), and subtracting from~\(1\) gives $\Pr(\mathcal{E}_1)$.
\end{proof}


\begin{figure*}[h!]
\centering
\begin{subfigure}[b]{0.32\textwidth}
\centering
\begin{tikzpicture}[scale=0.7]
\begin{semilogyaxis}[
    legend cell align={left},
    tick scale binop=\times, 
    ylabel style={yshift=-0.5ex},
    xlabel={Edit Probability $P_{\text{edit}}$ },
    ylabel={Probability of Decoding Error},
    grid=both,
          minor tick num=1,
    xtick={0.001, 0.003, 0.005, 0.007, 0.009, 0.011, 0.013, 0.015},
    ytick={0.00001,0.0001, 0.001, 0.01, 0.1, 1},
    legend pos=south east,
    legend style={font=\scriptsize},
    xmajorgrids=true,
    xminorgrids=true,
    ymajorgrids=true,
    yminorgrids=true,
    major grid style={black!15}, 
    minor grid style={dotted,black!40}, 
    ymax=1,
    ymin=1e-5,
        xmin=0.001,
    xmax=0.015,
]

\addplot+[color=blue,mark=o,mark size=1.5, mark options={fill=none},line width=0.8pt] coordinates {
(0.001,4.8000e-05)
(0.002,2.4100e-04)
(0.003,7.7800e-04)
(0.004,1.6860e-03)
(0.005,3.3580e-03)
(0.006,5.9430e-03)
(0.007,9.9550e-03)
(0.008,1.5197e-02)
(0.009,2.2120e-02)
(0.010,3.0270e-02)
(0.011,4.1420e-02)
(0.012,5.6270e-02)
(0.013,7.1960e-02)
(0.014,8.9380e-02)
(0.015,1.0753e-01)
%
};

\addplot+[color=blue,mark=none,mark size=1.5, mark options={solid, fill=none,rotate=180},line width=0.8pt,dashed] coordinates {
(0.001,7.62068e-05)
(0.002,0.000392682)
(0.003,0.00117487)
(0.004,0.00272703)
(0.005,0.00538953)
(0.006,0.00950926)
(0.007,0.0154195)
(0.008,0.0234263)
(0.009,0.0337993)
(0.010,0.0467665)
(0.011,0.0625105)
(0.012,0.0811673)
(0.013,0.102826)
(0.014,0.127531)
(0.015,0.15528)
};

\addplot+[color=red,mark=o,mark size=1.5, mark options={fill=none},line width=0.7pt] coordinates {
(0.001,4.700e-05)
(0.002,2.100e-04)
(0.003,5.290e-04)
(0.004,1.075e-03)
(0.005,1.876e-03)
(0.006,3.326e-03)
(0.007,5.364e-03)
(0.008,8.371e-03)
(0.009,1.225e-02)
(0.010,1.767e-02)
(0.011,2.467e-02)
(0.012,3.411e-02)
(0.013,4.581e-02)
(0.014,5.955e-02)
(0.015,7.410e-02)
%
};

\addplot+[color=red,mark=none,mark size=1.5, mark options={solid, fill=none,rotate=180},line width=0.7pt,dashed] coordinates {
(0.001,7.61196e-05)
(0.002,0.000316833)
(0.003,0.00077548)
(0.004,0.00156143)
(0.005,0.00284109)
(0.006,0.00482992)
(0.007,0.00777932)
(0.008,0.0119612)
(0.009,0.0176528)
(0.010,0.0251218)
(0.011,0.0346149)
(0.012,0.0463475)
(0.013,0.0604967)
(0.014,0.077197)
(0.015,0.096538)
};

\addplot+[color=orange,mark=none,mark size=1.5, mark options={solid, fill=none,rotate=180},line width=1pt,densely dashdotted] coordinates {
(0.001,0.195157)
(0.002,0.352369)
(0.003,0.478986)
(0.004,0.580940)
(0.005,0.663017)
(0.006,0.729077)
(0.007,0.782236)
(0.008,0.825002)
(0.009,0.859401)
(0.010,0.887063)
(0.011,0.909303)
(0.012,0.927179)
(0.013,0.941545)
(0.014,0.953087)
(0.015,0.962359)
};

\legend{}
\addlegendentry{GC+ Empirical (sym.)}
\addlegendentry{GC+ Theoretical (sym.)}
\addlegendentry{GC+ Empirical (asym.)}
\addlegendentry{GC+ Theoretical (asym.)}
\addlegendentry{Uncoded}

\end{semilogyaxis}
\end{tikzpicture}
\caption{Repetition code ($t=3$); $(n, R) = (217, 0.645)$.}
\label{suba}
\end{subfigure}\hfill
\begin{subfigure}[b]{0.32\textwidth}
\centering
\begin{tikzpicture}[scale=0.7]
\begin{semilogyaxis}[
    legend cell align={left},
    tick scale binop=\times, 
    ylabel style={yshift=-0.5ex},
    xlabel={Edit Probability $P_{\text{edit}}$ },
    ylabel={Probability of Decoding Error},
    grid=both,
    minor tick num=1,
    xtick={0.001, 0.003, 0.005, 0.007, 0.009, 0.011, 0.013, 0.015},
    ytick={1e-6, 0.00001,0.0001, 0.001, 0.01, 0.1, 1},
    legend pos=south east,
    legend style={font=\scriptsize},
    xmajorgrids=true,
    xminorgrids=true,
    ymajorgrids=true,
    yminorgrids=true,
    major grid style={black!15}, 
    minor grid style={dotted,black!40}, 
    ymax=1,
    ymin=1e-6,
    xmin=0.001,
    xmax=0.015,
]

\addplot+[color=blue,mark=o,mark size=1.5, mark options={fill=none},line width=0.8pt] coordinates {
(0.001,5.0000e-06)
(0.002,7.8000e-05)
(0.003,3.9400e-04)
(0.004,1.1230e-03)
(0.005,2.5170e-03)
(0.006,5.0680e-03)
(0.007,8.7110e-03)
(0.008,1.3729e-02)
(0.009,2.0510e-02)
(0.010,3.0180e-02)
(0.011,4.1580e-02)
(0.012,5.5860e-02)
(0.013,7.0970e-02)
(0.014,8.7450e-02)
(0.015,1.0673e-01)
%
%
};

\addplot+[color=blue,mark=none,mark size=1.5, mark options={solid, fill=none,rotate=180},line width=0.8pt,dashed] coordinates {
(0.001,8.94771e-06)
(0.002,0.000127642)
(0.003,0.000587342)
(0.004,0.00169792)
(0.005,0.0038051)
(0.006,0.00726097)
(0.007,0.0124038)
(0.008,0.0195443)
(0.009,0.0289569)
(0.010,0.0408738)
(0.011,0.055482)
(0.012,0.0729215)
(0.013,0.0932856)
(0.014,0.116621)
(0.015,0.142932)
};

\addplot+[color=red,mark=o,mark size=1.5, mark options={fill=none},line width=0.8pt] coordinates {
(0.001,1e-6)
(0.002,1.000e-05)
(0.003,6.900e-05)
(0.004,2.810e-04)
(0.005,7.230e-04)
(0.006,1.657e-03)
(0.007,3.088e-03)
(0.008,5.587e-03)
(0.009,9.180e-03)
(0.010,1.367e-02)
(0.011,2.067e-02)
(0.012,2.924e-02)
(0.013,3.958e-02)
(0.014,5.220e-02)
(0.015,6.697e-02)
%
};

\addplot+[color=red,mark=none,mark size=1.5, mark options={solid, fill=none,rotate=180},line width=0.8pt,dashed] coordinates {
(0.001,1.49808e-06)
(0.002,2.26707e-05)
(0.003,0.000123196)
(0.004,0.000418587)
(0.005,0.00108121)
(0.006,0.00233231)
(0.007,0.00442886)
(0.008,0.00764826)
(0.009,0.0122728)
(0.010,0.0185756)
(0.011,0.026808)
(0.012,0.0371902)
(0.013,0.0499041)
(0.014,0.0650885)
(0.015,0.0828374)
};

\addplot+[color=orange,mark=none,mark size=1.5, mark options={solid, fill=none,rotate=180},line width=1pt,densely dashdotted] coordinates {
(0.001,0.206352)
(0.002,0.370269)
(0.003,0.500447)
(0.004,0.603807)
(0.005,0.685854)
(0.006,0.750968)
(0.007,0.802632)
(0.008,0.843615)
(0.009,0.876116)
(0.010,0.901886)
(0.011,0.922314)
(0.012,0.938503)
(0.013,0.951330)
(0.014,0.961491)
(0.015,0.969537)
};

\legend{}
\addlegendentry{GC+ Empirical (sym.)}
\addlegendentry{GC+ Theoretical (sym.)}
\addlegendentry{GC+ Empirical (asym.)}
\addlegendentry{GC+ Theoretical (asym.)}
\addlegendentry{Uncoded}

\end{semilogyaxis}
\end{tikzpicture}
\caption{Repetition code ($t=5$); $(n, R) = (231,0.606)$.}
\label{subb}
\end{subfigure}\hfill
\begin{subfigure}[b]{0.34\textwidth}
\centering
\begin{tikzpicture}[scale=0.7]
\begin{semilogyaxis}[
    legend cell align={left},
    tick scale binop=\times, 
    ylabel style={yshift=-0.5ex},
    xlabel={Edit Probability $P_{\text{edit}}$ },
    ylabel={Probability of Decoding Error},
    grid=both,
    minor tick num=1,
    xtick={0.001, 0.003, 0.005, 0.007, 0.009, 0.011, 0.013, 0.015},
    ytick={1e-6, 0.00001,0.0001, 0.001, 0.01, 0.1, 1},
    legend pos=south east,
    legend style={font=\scriptsize},
    xmajorgrids=true,
    xminorgrids=true,
    ymajorgrids=true,
    yminorgrids=true,
    major grid style={black!15}, 
    minor grid style={dotted,black!40}, 
    ymax=1,
    ymin=1e-6,
        xmin=0.001,
    xmax=0.015,
]

\addplot+[color=blue,mark=o,mark size=1.5, mark options={fill=none},line width=0.8pt] coordinates {
(0.001,7.0000e-06)
(0.002,6.3000e-05)
(0.003,3.5700e-04)
(0.004,1.0090e-03)
(0.005,2.2820e-03)
(0.006,4.4000e-03)
(0.007,7.9080e-03)
(0.008,1.2633e-02)
(0.009,1.8260e-02)
(0.010,2.7550e-02)
(0.011,3.7910e-02)
(0.012,4.8730e-02)
(0.013,6.3050e-02)
(0.014,8.0720e-02)
(0.015,9.8740e-02)
%
};

\addplot+[color=blue,mark=none,mark size=1.5, mark options={solid, fill=none,rotate=180},line width=0.8pt,dashed] coordinates {
(0.001,9.57582e-06)
(0.002,0.00013261)
(0.003,0.000603922)
(0.004,0.00173678)
(0.005,0.00388013)
(0.006,0.00738914)
(0.007,0.012605)
(0.008,0.0198412)
(0.009,0.0293747)
(0.010,0.0414404)
(0.011,0.0562275)
(0.012,0.0738781)
(0.013,0.0944876)
(0.014,0.118105)
(0.015,0.144736)
};

\addplot+[color=red,mark=o,mark size=1.5, mark options={fill=none},line width=0.7pt] coordinates {
(0.001,1.000e-06)
(0.002,1.200e-05)
(0.003,7.200e-05)
(0.004,2.310e-04)
(0.005,6.590e-04)
(0.006,1.526e-03)
(0.007,2.847e-03)
(0.008,5.096e-03)
(0.009,8.310e-03)
(0.010,1.353e-02)
(0.011,1.987e-02)
(0.012,2.757e-02)
(0.013,3.775e-02)
(0.014,5.018e-02)
(0.015,6.415e-02)
%
};

\addplot+[color=red,mark=none,mark size=1.5, mark options={solid, fill=none,rotate=180},line width=0.7pt,dashed] coordinates {
(0.001,1.89931e-06)
(0.002,2.58233e-05)
(0.003,0.000133646)
(0.004,0.000442911)
(0.005,0.00112787)
(0.006,0.00241147)
(0.007,0.00455227)
(0.008,0.00782912)
(0.009,0.0125256)
(0.010,0.018916)
(0.011,0.0272527)
(0.012,0.0377568)
(0.013,0.0506111)
(0.014,0.0659551)
(0.015,0.0838832)
};

\addplot+[color=orange,mark=none,mark size=1.5, mark options={solid, fill=none,rotate=180},line width=1pt,densely dashdotted] coordinates {
(0.001,0.194352)
(0.002,0.351071)
(0.003,0.477418)
(0.004,0.579257)
(0.005,0.661323)
(0.006,0.727442)
(0.007,0.780700)
(0.008,0.823591)
(0.009,0.858124)
(0.010,0.885922)
(0.011,0.908294)
(0.012,0.926294)
(0.013,0.940775)
(0.014,0.952421)
(0.015,0.961786)
};

\legend{}
\addlegendentry{GC+ Empirical (sym.)}
\addlegendentry{GC+ Theoretical (sym.)}
\addlegendentry{GC+ Empirical (asym.)}
\addlegendentry{GC+ Theoretical (asym.)}
\addlegendentry{Uncoded}

\end{semilogyaxis}
\end{tikzpicture}
\caption{Exhaustive search ($d^{\text{SLD}}_{\text{min}}=5$); $(n,R)=(216,0.648)$.}
\label{subc}
\end{subfigure}
\caption{Empirical and theoretical probability of decoding error (frame error rate) of the binary GC+ code under i.i.d. edit errors. The message length is $k=140$, and the code parameters are set to $\ell=\lfloor \log k \rfloor = 7$, $(c_1,c_2)=(8,1)$, with $\lambda=1$ for $|\Delta|\leq 1$ and $\lambda=0$ otherwise. Subfigures~(a)–(c) show results for different configurations of the parity encoding function (Section~\ref{discF}), where~(a) and (b) both use $t$-fold repetition codes with different $t$, and~(c) uses a code of length $20$ with $d^{\text{SLD}}_{\min}=5$ constructed via exhaustive search. The corresponding blocklengths $n$ and code rates $R$ of the GC+ code are given in the subcaptions. Theoretical values follow from~\eqref{eq6}, and empirical ones are averaged over $10^6$ independent simulation runs. For reference, the uncoded frame error rate, given by $1-(1-P_{\text{edit}})^n$, is also plotted. ``Sym'' and ``asym.'' denote the symmetric and asymmetric edit error proportions specified in the text.}
\label{fig2}
\end{figure*}

\subsubsection{Probability of event $\mathcal{E}_2$}\label{err2}
Recall that $\tilde{\rv{D}} = \sum_{i=1}^{N'} \rv{D}_i$ denotes the cumulative offset. Let 
\[
\tilde{\rv{D}}^+ \triangleq \sum_{i=1}^{N'} \rv{D}_i\, \mathds{1}_{\{\rv{D}_i>0\}}, \quad \tilde{\rv{D}}^- \triangleq -\sum_{i=1}^{N'} \rv{D}_i\, \mathds{1}_{\{\rv{D}_i<0\}},
\]
and define 
\[
\tilde{\rv{Z}}\triangleq \min\{\tilde{\rv{D}}^+,  \tilde{\rv{D}}^-\} \in [0,\ell N'].
\]
When $\tilde{\rv{D}}\neq 0$, $\tilde{\rv{Z}}$ equals the total magnitude of the offsets whose sign is opposite to that of $\tilde{\rv{D}}$.
When $\tilde{\rv{D}}= 0$, the positive and negative parts have equal magnitude, and $\tilde{\rv{Z}}$ is equal to this common value. Lemma~\ref{probE2} expresses the probability of event $\mathcal{E}_2$ in terms of the joint distribution of $(\tilde{\rv{Z}}, \tilde{\rv{D}})$.
\begin{lemma} \label{probE2}
Let $\lambda : [-\ell N', \ell N'] \to [0, \ell N']$ be the fixed function representing the predefined decoding depth parameter for each value of the cumulative offset $\Delta$. Then, the probability of event $\mathcal{E}_2$ is given by
\[
  \Pr(\mathcal{E}_2)
      = 1-\sum_{\Delta=-\ell N}^{\ell N}
             \sum_{z=0}^{\lambda(\Delta)}
                 \Pr(\tilde{\rv{Z}}=z,\tilde{\rv{D}}=\Delta).
\]
\end{lemma}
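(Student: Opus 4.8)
The plan is to collapse the event $\mathcal{E}_2$ into a simple one-dimensional comparison between $\tilde{\rv{Z}}$ and $\lambda(\tilde{\rv{D}})$ by means of an elementary algebraic identity, and then to decompose the resulting probability by conditioning on the realized value of the cumulative offset $\tilde{\rv{D}}$.

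First I would split the offset vector $\Rv{D}=(\rv{D}_1,\ldots,\rv{D}_{N'})$ into its positive and negative parts. With $\tilde{\rv{D}}^+$ and $\tilde{\rv{D}}^-$ defined as in the statement, one has the two elementary bookkeeping identities $\tilde{\rv{D}}=\tilde{\rv{D}}^+-\tilde{\rv{D}}^-$ and $\|\Rv{D}\|_1=\sum_{i=1}^{N'}|\rv{D}_i|=\tilde{\rv{D}}^++\tilde{\rv{D}}^-$, where $\tilde{\rv{D}}^+,\tilde{\rv{D}}^-\geq 0$. Applying the identity $a+b=|a-b|+2\min\{a,b\}$, valid for all $a,b\geq 0$, with $a=\tilde{\rv{D}}^+$ and $b=\tilde{\rv{D}}^-$, yields
\[
\|\Rv{D}\|_1=|\tilde{\rv{D}}|+2\tilde{\rv{Z}},
\]
with $\tilde{\rv{Z}}=\min\{\tilde{\rv{D}}^+,\tilde{\rv{D}}^-\}$, which is exactly the quantity defined before the lemma (this also confirms the interpretation in the text that $\tilde{\rv{Z}}$ measures the positive–negative cancellation among the entries of $\Rv{D}$). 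The only point worth checking here is the boundary case $\tilde{\rv{D}}=0$, where $\tilde{\rv{D}}^+=\tilde{\rv{D}}^-=\tilde{\rv{Z}}$ and the identity reduces to $\|\Rv{D}\|_1=2\tilde{\rv{Z}}$.

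Next I would substitute this identity into $\mathcal{E}_2=\{\|\Rv{D}\|_1>|\tilde{\rv{D}}|+2\lambda(\tilde{\rv{D}})\}$; cancelling the common term $|\tilde{\rv{D}}|$ gives $\mathcal{E}_2=\{\tilde{\rv{Z}}>\lambda(\tilde{\rv{D}})\}$, hence $\Pr(\mathcal{E}_2)=1-\Pr(\tilde{\rv{Z}}\leq\lambda(\tilde{\rv{D}}))$. It then remains to evaluate $\Pr(\tilde{\rv{Z}}\leq\lambda(\tilde{\rv{D}}))$ by the law of total probability over the value $\Delta$ of $\tilde{\rv{D}}$. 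Since $\rv{D}_i\in[-\ell,\ell]$ for each $i$, the support of $\tilde{\rv{D}}$ lies in $[-\ell N',\ell N']\subseteq[-\ell N,\ell N]$, and since $\tilde{\rv{Z}}$ is a nonnegative integer while $\lambda(\Delta)\geq 0$, the event $\{\tilde{\rv{Z}}\leq\lambda(\Delta)\}$ is the disjoint union of $\{\tilde{\rv{Z}}=z\}$ over $z\in\{0,1,\ldots,\lambda(\Delta)\}$. Combining,
\[
\Pr(\tilde{\rv{Z}}\leq\lambda(\tilde{\rv{D}}))=\sum_{\Delta=-\ell N}^{\ell N}\sum_{z=0}^{\lambda(\Delta)}\Pr(\tilde{\rv{Z}}=z,\tilde{\rv{D}}=\Delta),
\]
and plugging this into $\Pr(\mathcal{E}_2)=1-\Pr(\tilde{\rv{Z}}\leq\lambda(\tilde{\rv{D}}))$ gives the stated formula; terms with $|\Delta|>\ell N'$ vanish, so the exact choice of summation range is immaterial.

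The hard part, to the extent there is one, is just recognizing the identity $\|\Rv{D}\|_1=|\tilde{\rv{D}}|+2\tilde{\rv{Z}}$: once it is in hand, $\mathcal{E}_2$ reduces to the clean comparison $\{\tilde{\rv{Z}}>\lambda(\tilde{\rv{D}})\}$ and the rest is routine. The only mild subtlety is that $\lambda$ is evaluated at the \emph{realized} value of $\tilde{\rv{D}}$, so the inner summation bound $\lambda(\Delta)$ genuinely varies with $\Delta$; this is handled transparently by conditioning on $\tilde{\rv{D}}=\Delta$ before summing over $z$.
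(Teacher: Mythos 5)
Your proof is correct and follows essentially the same route as the paper: both rest on the identity $\|\Rv{D}\|_1-|\tilde{\rv{D}}|=2\min\{\tilde{\rv{D}}^+,\tilde{\rv{D}}^-\}=2\tilde{\rv{Z}}$, which collapses $\mathcal{E}_2$ to $\{\tilde{\rv{Z}}>\lambda(\tilde{\rv{D}})\}$, followed by decomposing the complement into the disjoint events $\{\tilde{\rv{Z}}=z,\tilde{\rv{D}}=\Delta\}$. Your remark that the summation range $[-\ell N,\ell N]$ versus the actual support $[-\ell N',\ell N']$ is immaterial is a small but welcome clarification that the paper leaves implicit.
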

\begin{proof}
The event $\mathcal{E}_2=\Bigl\{\|\Rv D\|_1>\lvert\tilde{\rv D}\rvert+2\lambda(\tilde{\rv D})\Bigr\}$ can be rewritten in terms of
\(\tilde{\rv D}^{+}\) and \(\tilde{\rv D}^{-}\) as follows.  
Since
\(\|\Rv D\|_1=\tilde{\rv D}^{+}+\tilde{\rv D}^{-}\) and
\(\lvert\tilde{\rv D}\rvert=\lvert\tilde{\rv D}^{+}-\tilde{\rv D}^{-}\rvert\), then
\[
  \|\Rv D\|_1-\lvert\tilde{\rv D}\rvert
       = 2\min\{\tilde{\rv D}^{+},\tilde{\rv D}^{-}\}
       = 2\tilde{\rv Z}.
\]
Hence,
\[
  \mathcal{E}_2=\{\tilde{\rv Z}>\lambda(\tilde{\rv D})\}.
\]
Taking complements gives
\[
  \bar{\mathcal{E}}_2
     =\bigl\{\tilde{\rv Z}\le\lambda(\tilde{\rv D})\bigr\}
     =\bigcup_{\Delta=-\ell N'}^{\ell N'}
        \bigcup_{z=0}^{\lambda(\Delta)}
           \{\tilde{\rv Z}=z,\tilde{\rv D}=\Delta\}.
\]
Since the events in the last union are mutually exclusive, we have
\[
  \Pr(\bar{\mathcal{E}}_2)
      =\sum_{\Delta=-\ell N'}^{\ell N'}
         \sum_{z=0}^{\lambda(\Delta)}
            \Pr(\tilde{\rv Z}=z,\tilde{\rv D}=\Delta),
\]
and subtracting from~\(1\) gives the result in Lemma~\ref{probE2}.
\end{proof}
Evaluating the probability of $\mathcal{E}_2$ based on Lemma~\ref{probE2} requires the joint PMF of $(\tilde{\rv{Z}}, \tilde{\rv{D}})$, which we derive next. The segment-level offsets $(\rv{D}_i)_{i=1}^{N'}$ are i.i.d. with support $[-\ell,\ell]$. The PMF of a generic segment-level offset $\rv{D}\in [-\ell,\ell]$ is given by
\begin{multline} \label{eqDd}
\Pr(\rv{D}=\delta) =  \sum_{j=\max\{0,-\delta\}}^{\left\lfloor (\ell-\delta)/2 \right\rfloor} \binom{\ell}{j, j+\delta, \ell - 2j - \delta} \\
\times (P_d)^j (P_i)^{j+\delta} (1-P_d-P_i)^{\ell-2j-\delta},
\end{multline}
which follows from the definition $\mathsf{D} = \mathsf{N}^{\mathsf{ins}} - \mathsf{N}^{\mathsf{del}}$ and the fact that $(\mathsf{N}^{\mathsf{del}}, \mathsf{N}^{\mathsf{ins}}, \ell-\mathsf{N}^{\mathsf{del}}- \mathsf{N}^{\mathsf{ins}})$ has a multinomial distribution with parameters \mbox{$(\ell\,;\,P_d,P_i,1-P_d-P_i)$}. 

Define the PMFs over the non-negative and non-positive supports of $\rv{D}$ by
\begin{align*}
\p&=(p_0,\ldots,p_\ell), \quad p_\delta \triangleq \Pr(\rv{D}=\delta),\,\delta\in [0,\ell], \\
\q&=(q_0,\ldots,q_\ell), \quad q_\delta \triangleq \Pr(\rv{D}=-\delta),\,\delta\in [0,\ell].
\end{align*}
Also, define the PMFs over the positive and negative supports (excluding zero) as
\[
        \bar{\p}\triangleq(\bar{p}_1,\ldots,\bar{p}_\ell),\qquad
        \bar{\q}\triangleq(\bar{q}_1,\ldots,\bar{q}_\ell),
\]
with $\bar{p}_{\delta}=p_{\delta}$ and $\bar{q}_{\delta}=q_{\delta}$ for all $\delta \in [1,\ell]$. For a vector $\x$, we write $\x^{(*j)}$ for its $j$-fold discrete convolution, and $x^{(*j)}_z$ as the $z$-th element of $\x^{(*j)}$. Claim~\ref{jointE2} provides the exact joint distribution of $(\tilde{\rv{Z}}, \tilde{\rv{D}})$. The proof of this claim is given in Appendix~\ref{appB}.
\begin{claim} \label{jointE2}
For $z\in [0,\ell N']$ and \mbox{$\Delta\in[-\ell N', \ell N']$}, the joint PMF of the random variables $(\tilde{\rv{Z}},\tilde{\rv{D}})$ is given by
\begin{equation*}
  \Pr(\tilde{\rv{Z}}=z,\tilde{\rv{D}}=\Delta)%
    = 
    \begin{cases}
        \displaystyle\sum_{j=j_{\text{min}}}^{j_{\text{max}}}
        \binom{N'}{j}\;
        \bar{q}^{(*j)}_{z}\;
        p^{(*\,N'-j)}_{\Delta+z},
  \;\; \Delta\ge 0, \\
     \displaystyle\sum_{j=j_{\text{min}}}^{j_{\text{max}}}
       \binom{N'}{j}\;
        \bar{p}^{(*j)}_{z}\;
        q^{(*\,N'-j)}_{\Delta+z},
  \;\; \Delta< 0,
    \end{cases}
\end{equation*}
where
\[
j_{\text{min}} = \left \lceil \frac{z}{\ell} \right \rceil, \quad
j_{\text{max}} = 
\min \left\{
      z,\;
      N' - \left\lceil \dfrac{|\Delta|+z}{\ell} \right\rceil
    \right\}.
\]
\end{claim}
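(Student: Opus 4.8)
The plan is to compute the joint PMF of $(\tilde{\rv{Z}},\tilde{\rv{D}})$ by conditioning on which segments carry a positive offset versus a non-positive one, then convolving the two groups separately. First I would introduce, for each realization, the index set $S\subseteq[N']$ of segments $i$ with $\rv{D}_i>0$ (so $|S|=j$), and its complement of segments with $\rv{D}_i\le 0$. Since the $\rv{D}_i$ are i.i.d., conditioned on a particular set $S$ of size $j$, the positive entries $\{\rv{D}_i : i\in S\}$ are i.i.d.\ with law $\bar{\p}$ (the conditional law on $\{1,\ldots,\ell\}$, scaled into the PMF $\bar p$ as the paper defines it) and the remaining $\{\rv{D}_i : i\notin S\}$ are i.i.d.\ with law $\q$ on $\{0,-1,\ldots,-\ell\}$. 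By symmetry there are $\binom{N'}{j}$ such sets, each contributing equally, which produces the binomial prefactor.

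Next I would translate $\tilde{\rv{Z}}$ and $\tilde{\rv{D}}$ into sums over these two groups. On the event that exactly $j$ segments are positive, the positive part is $\tilde{\rv{D}}^+=\sum_{i\in S}\rv{D}_i$ and the negative part is $\tilde{\rv{D}}^-=-\sum_{i\notin S}\rv{D}_i$. Assuming for concreteness $\Delta\ge 0$, I expect (and would verify) that the minimum $\tilde{\rv{Z}}=\min\{\tilde{\rv{D}}^+,\tilde{\rv{D}}^-\}$ equals $\tilde{\rv{D}}^-$ on this event --- i.e.\ when the net offset is non-negative, the smaller of the two one-sided totals is the negative one --- so that $\tilde{\rv{Z}}=z$ forces $\sum_{i\notin S}(-\rv{D}_i)=z$ and hence $\tilde{\rv{D}}^+=\Delta+z$. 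The event $\{\sum_{i\notin S}(-\rv{D}_i)=z\}$ has probability $\bar q^{(*j')}_{z}$ where $j'$ is the number of \emph{strictly negative} segments among the $N'-j$ non-positive ones; but since zero-offset segments contribute $0$ to both sums, one can equivalently write the distribution of $\sum_{i\notin S}(-\rv{D}_i)$ using the full non-positive PMF $\q$ raised to the $(N'-j)$-fold convolution --- and symmetrically the positive total uses $\bar{\p}^{(*j)}$ evaluated at $\Delta+z$. Here I need to be careful about which of $\p,\q$ (including the zero atom) versus $\bar{\p},\bar{\q}$ (excluding it) appears in each factor; the clean statement in Claim~\ref{jointE2} uses $\bar q^{(*j)}_z$ for the minority (negative) side and $p^{(*N'-j)}_{\Delta+z}$ for the majority side, which I would match by observing that the $j$ positive segments are by definition strictly positive, while the remaining $N'-j$ segments may include zeros, so their one-sided sums are governed by $\bar{\q}$ restricted to strictly-negative contributors but equivalently by $\q^{(*(N'-j))}$ after absorbing zeros --- a bookkeeping identity I would spell out. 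The case $\Delta<0$ is handled by the mirror-image argument, swapping the roles of $\p\leftrightarrow\q$ and $\bar{\p}\leftrightarrow\bar{\q}$.

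Finally I would determine the summation limits $j_{\text{min}},j_{\text{max}}$. The lower bound $j_{\text{min}}=\lceil z/\ell\rceil$ comes from the fact that the negative total $z$ is a sum of at most $\ell$-sized contributions from the strictly-negative segments, so we need at least $\lceil z/\ell\rceil$ of them --- but since we are summing over $j$, the number of \emph{positive} segments, this constraint must be re-expressed; I would check the indexing convention in the claim carefully, since $j$ plays the role of the index in $\bar q^{(*j)}_z$, meaning in the $\Delta\ge 0$ branch $j$ counts the strictly-negative (minority) segments, not the positive ones. Given that reading, $j_{\text{min}}=\lceil z/\ell\rceil$ is immediate (need enough minority segments to accumulate $z$), and $j_{\text{max}}=\min\{z,\ N'-\lceil(|\Delta|+z)/\ell\rceil\}$ because each minority segment contributes at least $1$ (so $j\le z$), and the remaining $N'-j$ majority segments must accumulate $|\Delta|+z$ with at most $\ell$ per segment (so $N'-j\ge\lceil(|\Delta|+z)/\ell\rceil$). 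The main obstacle I anticipate is precisely this bookkeeping: pinning down exactly which PMF (barred or not, and over which group) is convolved how many times, and confirming that the claimed index $j$ refers to the minority side; once the combinatorial identity $\tilde{\rv Z}=\min\{\tilde{\rv D}^+,\tilde{\rv D}^-\}$ is correctly aligned with the sign of $\Delta$, summing over the partition of $[N']$ into the two groups and invoking independence gives the convolution formula directly.
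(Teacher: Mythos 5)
Your proposal is correct and follows essentially the same route as the paper's proof: partition the segments by sign, use the i.i.d. structure to write the two one-sided totals as convolutions of the restricted PMFs with a $\binom{N'}{j}$ prefactor, identify $\tilde{\rv Z}$ with the minority-side total (which for $\Delta\ge 0$ is $\tilde{\rv D}^-=z$, forcing $\tilde{\rv D}^+=\Delta+z$), and derive $j_{\min},j_{\max}$ from the per-segment bounds $1\le|\rv D_i|\le\ell$. Your initial choice of $S$ as the set of \emph{positive} segments would yield the equivalent mirror formula $\binom{N'}{j}\bar p^{(*j)}_{\Delta+z}q^{(*(N'-j))}_z$, but you correctly pivot to the claim's convention in which $j$ counts the strictly negative (minority) segments, which is exactly the paper's decomposition, so no ``absorbing zeros'' identity is actually needed.
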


\subsubsection{Probability of event $\mathcal{E}_3$}\label{err3}
The probability of $\mathcal{E}_3$ depends on the parity encoding function being used. One of the options discussed in Section~\ref{discF} is to encode the $\tilde{k} = c_2\ell$ check parity bits using an $(\tilde{n}, \tilde{k})$ code with minimum SLD $d^{\text{SLD}}_{\text{min}}$. In this case, the decoder can correct up to $\tilde{t} = \left\lfloor (d^{\text{SLD}}_{\text{min}} - 1)/2 \right\rfloor$ edit errors in the check parities with zero error by applying minimum distance decoding to the last $\tilde{n}$ bits of the channel output~$\y$. Therefore, an immediate upper bound on $\Pr(\mathcal{E}_3)$ under the SLD-based $(\tilde{n}, \tilde{k})$ code is
\begin{equation}
 \Pr(\mathcal{E}_3) \leq 1 - \sum_{j=0}^{\tilde{t}} \binom{\tilde{n}}{j} P_{\text{edit}}^j (1-P_{\text{edit}})^{\tilde{n}-j}.
\end{equation}
Another option discussed in Section~\ref{discF} is to use a $t$-fold repetition code and apply majority voting over non-overlapping sliding windows of size $t$ within the last $tc_2\ell$ bits of $\y$. Deriving a closed-form upper bound on $\Pr(\mathcal{E}_3)$ in this case is challenging, as the windows may be misaligned due to offsets and may include ``foreign'' bits not originating from the check parities. Nevertheless, we provide a theoretical analysis in Appendix~\ref{appC} that allows upper bounding this probability via a recursive dynamic programming approach, and we apply it in Section~\ref{simul}.

\subsection{Simulation Results} \label{simul}
We simulated the overall probability of decoding error (decoding failures + miscorrections) of the binary GC+ code under the channel model described in Section~\ref{model}. Our evaluation focused on short blocklengths (a few hundred) and moderate code rates (near or above 0.5). The complete Python source code is available at: \mbox{\texttt{\url{https://github.com/sergekashanna/GCPlus}}}.
To the best of our knowledge, existing binary edit-correcting codes in the literature are either impractical or not designed for the considered blocklength and rate regimes, and are therefore not suitable for comparison.

Specifically, we set the message length to $k = 140$ bits with the segmentation parameter $\ell = \lfloor \log k \rfloor = 7$. For i.i.d. edits, we fixed the channel parameter to $w = n$ and varied the edit probability $P_{\text{edit}}$, which corresponds to an average edit rate of $\varepsilon_{av} = P_{\text{edit}}$. For localized edits, we varied the channel parameter $w < n$ while keeping $P_{\text{edit}}$ fixed, in which case the average edit rate becomes $\varepsilon_{av} = P_{\text{edit}} \times \tfrac{w}{n}$. In decoding, the {\em general check} (Section~\ref{dec}) was applied for the i.i.d. case, whereas the {\em burst check} was used for localized edits.

\subsubsection{I.I.D. edits} We studied values of $P_{\text{edit}}$ ranging from $0.1\%$ to $1.5\%$, with both symmetric error probabilities ($P_s = P_d = P_i = P_{\text{edit}}/3$) and asymmetric ones  ($P_s = 0.53P_{\text{edit}}, P_d = 0.45P_{\text{edit}}, P_i = 0.02P_{\text{edit}}$). The chosen range of $P_{\text{edit}}$ and the asymmetric proportions are both motivated by recent experimental results on DNA data storage~\cite{gimpel2023digital}. The number of parities of the GC+ code was set to $(c_1, c_2) = (8,1)$, i.e., $c_1 = 8$ guess parities and $c_2 = 1$ check parity.

For protecting the $7$ bits corresponding to the binary representation of the $c_2=1$ check parity, we tested different encoding functions (Section~\ref{discF}): (i) $(7t, 7)$ repetition codes with $t=3$ and $t=5$, resulting in $(217,140)$ and $(231,140)$ GC+ codes with rates $R=0.645$ and $R=0.606$, respectively, and (ii) a $(20,7)$ code with $d^{\text{SLD}}_{\text{min}}=5$ constructed via exhaustive search, yielding a $(216,140)$ GC+ code with rate $R=0.648$. 

The results in Fig.~\ref{fig2} compare the theoretical evaluation from~(12) with the empirical simulations, showing good agreement in all cases. These results indicate that the GC+ code can effectively correct i.i.d. edits within the regime of interest while maintaining moderately high code rates ($R>0.6$) at short blocklengths. Among the three configurations, the setting where the check parity is protected using the $(20,7)$ code (Fig.~\ref{subc}) consistently achieved the lowest probability of decoding error, with average relative reductions of $6\%$ and $37\%$ compared to the repetition settings with $t=5$ (Fig.~\ref{subb}) and $t=3$ (Fig.~\ref{suba}), respectively. This configuration also offered the highest code rate among the three.  The gains in reliability and rate came at the expense of increased time for decoding the check parity, since the setup in Fig.~\ref{subc} relies on minimum-distance decoding via lookup tables, whereas the repetition codes are decoded using much faster sliding-window majority decoding. Nevertheless, this increase is amortized when the improvement in decoding error probability is pronounced, because in error events the decoder typically exhausts all predefined patterns, while in successful cases it usually terminates early, making the time cost of errors dominate the parity decoding cost in the average-case. Finally, the results show that decoding performance also depends on the proportions of $P_d, P_i,$ and $P_s$ relative to $P_{\text{edit}}$, with a higher probability of decoding error observed in the symmetric case.


\subsubsection{Localized edits} We studied the case $w<n$, with $w$ varying between $\ell+1=8$ and $4\ell+1=29$, while fixing \mbox{$P_{\text{edit}}=99\%$} and \mbox{$P_s=P_d=P_i=33\%$}. The number of parities was set to \mbox{$c_1=c_2=(w-1)/\ell+1$}, and the parity encoding function based on the buffer $\b$ of length $3(w+1)$ defined in Section~\ref{discF} was used. The resulting code rate $R$ varies with $w$. The results presented in~Fig.~\ref{fig3} demonstrate that for localized edits, the GC+ code can handle significantly higher edit rates $\varepsilon_{av}$ compared to the i.i.d. case, while achieving lower decoding error probabilities. Furthermore, using the {\em burst check} (Section~\ref{dec}) instead of the {\em general check} in this setting led to much faster decoding.

\begin{figure}[h!]
\centering
 \begin{tabular}{|c|c|c|c|}
\hline
Window length & Avg. edit rate & Code rate & Prob. error \\ \hline
$w=8$ & $4.1\%$ & $0.72$ & $=2.83e^{-4}$ \\ \hline
$w=15$ & $6.5\%$ & $0.61$ & $=1.00e^{-6}$ \\ \hline
$w=22$ & $8.2\%$ & $0.53$ & $<1.00e^{-6}$ \\ \hline
$w=29$ & $9.6\%$ & $0.47$ & $<1.00e^{-6}$ \\ \hline
\end{tabular}
\caption{
Empirical probability of decoding error of the binary GC+ code under localized edits for varying window lengths $w$ (channel parameter). The edit probability is fixed at $P_{\text{edit}}=99\%$ with $P_s=P_d=P_i$. Code parameters are $k=140$, $\ell=\lfloor \log k \rfloor$, and $c_1=c_2=(w-1)/\ell+1$. For each $w$, the corresponding average edit rate $\varepsilon_{av}$ and code rate $R$ are reported. Buffer-based parity encoding is applied. Results are averaged over $10^6$ independent runs, with entries $<1.0e^{-6}$ indicating no decoding errors observed.}
\label{fig3}
\end{figure}

\section{Application to DNA Data Storage} \label{DNA}
In this section, we demonstrate how the GC+ code can be applied in the context of DNA storage to enhance data reliability. The general workflow for encoding binary data into DNA is illustrated in Fig.~\ref{fig1}. Due to the limitations of current DNA synthesis technologies, binary data is encoded into a collection of short DNA sequences, called oligos, typically a few hundred nucleotides in length. This is accomplished by partitioning the data into short, non-overlapping fragments and mapping each binary fragment into a sequence of nucleotides~$\{\mathtt{A,C,G,T}\}$. In practice, metadata such as oligo indices is also added to facilitate the reconstruction of the original data during retrieval. 

Within this workflow, error correction can be introduced at two levels to mitigate errors arising from DNA synthesis, storage, and sequencing. First, an \emph{outer code} is applied across the full set of oligos to generate redundant oligos. Then, an \emph{inner code} is applied to each oligo individually by adding redundancy within it. The two codes serve complementary purposes. The inner code is responsible for correcting errors within an individual oligo, which places strict demands on its design: it must operate at short blocklengths and, ideally, handle all three types of edit errors (deletions, insertions, and substitutions). The outer code is intended to recover lost oligos and/or correct residual errors from the inner code. In coding-theoretic terms, the role of the outer code is to correct erasures and substitutions, which are easier to handle than deletions and insertions addressed by the inner code. Moreover, since the number of oligos (fragments) depends on the data size and is typically large, code length is not a limiting factor for the outer code, making standard codes such as RS, LDPC, or Raptor well suited for this role.

\begin{figure}[h!]
\begin{center}
\resizebox{0.4\textwidth}{!}{
\begin{tikzpicture}[>=stealth, node distance=2.5cm, auto]
\tikzstyle{block} = [rectangle, draw, fill=lightgray!20, text centered, rounded corners, minimum height=2.5em]
\tikzstyle{line} = [draw, -latex']

\node (file) at (-1,0) {\includegraphics[width=1cm]{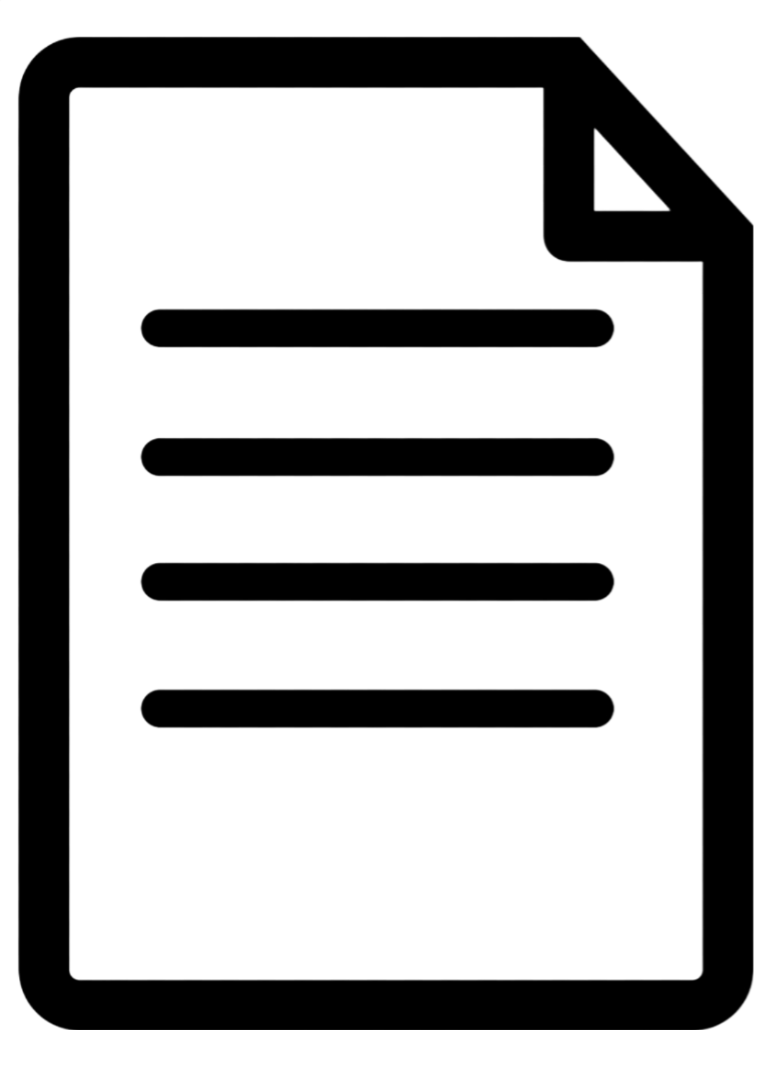}};
\node [font=\small, text width=2cm, align=center] at (-1,-1.05cm) {Compressed Binary Data};
\node [block, right=1.65cm of file] (fragment) {Fragmentation};
\node [block, right=1.65cm of fragment] (outer) {Outer Code};
\node [block, below=1cm+0.5em of outer] (inner) {Inner Code};
\node [block, below=1cm+0.5em of fragment] (dna) {DNA Coding};
\draw[lightgray!380, line width=2pt] (-1.5, -1.6) -- (-0.5, -1.6);
\draw[lightgray!350, line width=2pt] (-1.5, -1.75) -- (-0.5, -1.75);
\node[] (storage) at (-0.5, -2.05) {};
\draw[lightgray!320, line width=2pt] (-1.5, -1.9) -- (-0.5, -1.9);
\draw[lightgray!270, line width=2pt] (-1.5, -2.05) -- (-0.5, -2.05);
\draw[lightgray!240, line width=2pt] (-1.5, -2.2) -- (-0.5, -2.2);
\draw[lightgray!210, line width=2pt] (-1.5, -2.35) -- (-0.5, -2.35);
\draw[lightgray!180, line width=2pt] (-1.5, -2.5) -- (-0.5, -2.5);
\node [font=\small, text width=2cm, align=center] at (-1,-2.8cm) {DNA Oligos};
\node [font=\small, text width=3cm, align=center] at (6.15,0.76cm) {Error-correction};

\path [line] (file) -- (fragment);
\path [line] (fragment) -- (outer);
\path [line] (outer) -- (inner);
\path [line] (inner) -- (dna);
\path [line] (dna) -- (storage);

\draw[dashed, rounded corners] (5,-2.61) rectangle (7.25,0.55);
\end{tikzpicture} }
\caption{Encoding binary data into DNA oligos with error-correction.}
\label{fig1}
\end{center}
\end{figure}

We propose integrating the GC+ code as the inner code in the workflow depicted in Fig.~\ref{fig1}. We argue that the GC+ code is a strong candidate for this role because: {\em (i)}~It can effectively correct deletions, insertions, and substitutions at short blocklengths, adhering to the length constraints dictated by DNA synthesis; {\em (ii)}~By improving the reliability of individual oligo recovery, it reduces the number of sequencing reads required for reconstruction, thereby lowering sequencing costs; {\em (iii)}~It addresses edit errors that arise during synthesis and storage, whereas the effectiveness of consensus-based strategies such as multiple sequence alignment is largely limited to sequencing errors~\cite{press2020hedges}; and {\em (iv)}~Its ability to detect and signal decoding failures allows the outer code to treat such cases as erasures rather than substitutions, which optimizes decoding. 

Next, we present simulation results on the performance of the GC+ code in the context of DNA storage. In these simulations, we focus on unconstrained DNA coding (2 bits/NT) given by the mapping:~$ 00 \leftrightarrow \mathtt{A}$, $01 \leftrightarrow \mathtt{C}$, $10 \leftrightarrow \mathtt{G}$, $11 \leftrightarrow \mathtt{T}$. We first study the performance of the GC+ code in isolation in the quaternary domain (Section~\ref{sec:gc_quaternary}), evaluating its decoding error probability both empirically and theoretically, and comparing it to the HEDGES code~\cite{press2020hedges}. We then turn to the complete pipeline (Fig.~\ref{fig1}) in Section~\ref{sec:gc_rs}, where GC+ is integrated as the inner code and an RS code is used as the outer code.


\subsection{Performance of GC+ in the Quaternary Domain} \label{sec:gc_quaternary}
We evaluated the performance of the GC+ code in the quaternary domain under the aforementioned unconstrained binary-to-quaternary mapping. Specifically, starting with a binary message, we first encoded the message using the GC+ code, and then mapped the resulting binary codeword into a quaternary DNA sequence. The random channel model described in Section~\ref{model} is then applied to the quaternary sequence, and the channel output is mapped back to binary for GC+ decoding. Under this setup, each nucleotide edit in the quaternary domain corresponds to up to two consecutive bit edits in the underlying binary code. Moreover, if the segmentation parameter $\ell$ of the GC+ code is chosen to be an {\em even} integer, the bit edits induced by a nucleotide edit remain confined to a single segment of the codeword. This simplifies decoding by establishing an equivalence between the binary and quaternary domains. Importantly, this equivalence also enables extending the theoretical results in Section~\ref{th:analysis} from binary to quaternary by simply evaluating the segment length at $\ell/2$ instead of $\ell$, while keeping the number of information segments $K=k/\ell$ and all other parameters unchanged.

We compared the performance of the GC+ code in the quaternary domain with the HEDGES code~\cite{press2020hedges}, using the Python implementation provided in~\cite{HEDGES}. HEDGES is a convolutional code that directly encodes a binary bitstream into quaternary DNA symbols in a single step. It is practical for short code lengths, and its decoding algorithm is specifically designed to correct edits within a single DNA sequence, in contrast to recent decoding implementations of convolutional codes designed for decoding over multiple reads of the same codeword~\cite{trellisBMA,maarouf2022concatenated}. We evaluated the decoding error probability of both codes under i.i.d. and localized edits (as in Section~\ref{simul}) by applying the channel model over the quaternary alphabet.

In our simulations, we considered binary messages of length $168$ bits encoded into DNA sequences, where the amount of information bits encoded per DNA nucleotide defines the {\em information density} in bits/NT denoted by $\rho$ (higher is better). The codes are parametrized as follows:
\begin{itemize}[leftmargin=*]
\item {\em HEDGES: }  A code of length $176$~NTs with information density $\rho \approx 0.95$ bits/NT, generated by encoding a binary message of length $168$ bits using a convolutional code of rate~$0.5$ and appending one runout byte. The hyperparameters of HEDGES are set to the standard values from~\cite{press2020hedges}:  greediness parameter $p_{ok}=-0.1$, heap size $10^6$, a single runout byte, and unconstrained output.
\item {\em GC+: } For i.i.d. edits, we set $\ell=8$, $(c_1,c_2)=(8,1)$, $\lambda=1$ for $|\Delta|\leq 1$ and $\lambda=0$ otherwise. To protect the $4$ NTs (8 bits) corresponding to the $c_2=1$ check parity, we used a quaternary $(12,4)$ code with $d^{\text{SLD}}_{\text{min}}=5$ constructed via exhaustive search. A binary message of length $168$ bits is thus encoded into a DNA sequence of length $128$ NTs, yielding $\rho \approx 1.31$ bits/NT in the i.i.d. setting. For localized edits, we set $(c_1,c_2)=(2,2)$, $\ell=8$, and use buffer-based parity encoding, with code lengths and rates varying according to the channel parameter $w$.
\end{itemize}

The results for the i.i.d. case presented in Fig.~\ref{fig4} show that the GC+ code attains a lower decoding error probability than HEDGES for $P_{\text{edit}}$ between $0.1\%$ and $1.5\%$, while achieving a much higher information density of $1.31$ bits/NT compared to $0.95$ bits/NT for HEDGES. Furthermore, as in the binary case (Fig.~\ref{fig2}), the empirical and theoretical performance of the GC+ code are in close agreement, and the decoding error probability is lower for the asymmetric error proportions~($P_s = 0.53P_{\text{edit}}, P_d = 0.45P_{\text{edit}}, P_i = 0.02P_{\text{edit}}$) than for the symmetric ones ($P_s = P_d = P_i = P_{\text{edit}}/3$). Interestingly, this behavior contrasts with that of HEDGES, which performs worse in the asymmetric case.

The results for localized edits shown in Fig.~\ref{fig5} show that the HEDGES code is highly sensitive to the presence of consecutive edits, a behavior common among convolutional codes. In contrast, the GC+ code exhibits significantly better decoding performance in such scenarios. Note that the simulated GC+ codes in Fig.~\ref{fig5} are shorter and have higher information densities than the HEDGES code. Consequently, for a given value of $w$, the GC+ codes experience higher average edit rates $\varepsilon_{av}=P_{\text{edit}} \times \frac{w}{n}$ compared to HEDGES.

\begin{figure}[h!]
\begin{center}
\begin{tikzpicture}[scale=0.85]
\begin{semilogyaxis}[
    legend cell align={left},
    tick scale binop=\times,
    xlabel={Edit Probability $P_{\text{edit}}$ },
    ylabel={Probability of Decoding Error},
    grid=both,
    minor tick num=1,
    xtick={0.001, 0.003, 0.005, 0.007, 0.009, 0.011, 0.013, 0.015},
    ytick={1e-6, 1e-5, 0.0001, 0.001, 0.01, 0.1},
    legend pos=south east,
    legend style={font=\scriptsize},
    xmajorgrids=true,
    xminorgrids=true,
    ymajorgrids=true,
    yminorgrids=true,
    major grid style={black!15}, 
    minor grid style={dotted,black!40}, 
    ymax=0.11,
    ymin=1e-6,
    xmin=0.001,
    xmax=0.015,
]

\addplot+[color=blue,mark=o,mark size=1.5, mark options={fill=none},line width=0.6pt] coordinates {
(0.001,1.0000e-06)
(0.002,1.1000e-05)
(0.003,4.6000e-05)
(0.004,1.3200e-04)
(0.005,3.3800e-04)
(0.006,6.7800e-04)
(0.007,1.2230e-03)
(0.008,1.9640e-03)
(0.009,3.0730e-03)
(0.010,4.4950e-03)
(0.011,6.4380e-03)
(0.012,8.7890e-03)
(0.013,1.1549e-02)
(0.014,1.4783e-02)
(0.015,1.8952e-02)
%
%
};

\addplot+[color=blue,mark=none,mark size=1.5, mark options={solid, fill=none,rotate=180},line width=0.8pt,dashed] coordinates {
(0.001,1.28972e-06)
(0.002,1.80028e-05)
(0.003,8.40335e-05)
(0.004,0.000248648)
(0.005,0.000572133)
(0.006,0.00112227)
(0.007,0.00197149)
(0.008,0.00319457)
(0.009,0.00486677)
(0.010,0.00706235)
(0.011,0.00985338)
(0.012,0.0133088)
(0.013,0.0174935)
(0.014,0.0224681)
(0.015,0.028288)
};

\addplot+[color=red,mark=o,mark size=1.5,mark options={fill=none},line width=0.6pt] coordinates {
(0.001,0.000000)
(0.002,0.0000015)
(0.003,9.000e-06)
(0.004,0.0000285)
(0.005,0.0000700)
(0.006,0.0001600)
(0.007,0.0003160)
(0.008,0.0005660)
(0.009,0.0009295)
(0.010,0.0014605)
(0.011,0.0022225)
(0.012,0.0032965)
(0.013,0.0047205)
(0.014,0.0063515)
(0.015,8.337e-03)
};

\addplot+[color=red,mark=none,mark size=1.5, mark options={solid, fill=none,rotate=180},line width=0.8pt,dashed] coordinates {
(0.001,2.96874e-07)
(0.002,3.3729e-06)
(0.003,1.58171e-05)
(0.004,5.00433e-05)
(0.005,0.000125395)
(0.006,0.000268593)
(0.007,0.00051366)
(0.008,0.000901447)
(0.009,0.00147887)
(0.010,0.00229792)
(0.011,0.00341459)
(0.012,0.00488768)
(0.013,0.00677764)
(0.014,0.00914539)
(0.015,0.0120512)
};

\addplot+[color=customgreen,mark=square,mark size=1.5, mark options={fill=none},line width=0.6pt] coordinates {
(0.001,0.00426)
(0.002,0.00900)
(0.003,0.01283)
(0.004,0.01784)
(0.005,0.02376)
(0.006,0.02799)
(0.007,0.03296)
(0.008,0.03894)
(0.009,0.04354)
(0.010,0.04829)
(0.011,0.05433)
(0.012,0.05980)
(0.013,0.06449)
(0.014,0.06991)
(0.015,0.07736)
};

\addplot+[color=custombrown,mark=square,mark size=1.5, mark options={fill=none},line width=0.6pt, solid] coordinates {
    (0.001,0.006)
    (0.002,0.01195)
    (0.003,0.01807)
    (0.004,0.02449)
    (0.005,0.02989)
    (0.006,0.03737)
    (0.007,0.04446)
    (0.008,0.05057)
    (0.009,0.05893)
    (0.01,0.06525)
    (0.011,0.07152)
    (0.012,0.07945)
    (0.013,0.08517)
    (0.014,0.09405)
    (0.015,0.10008)
};

\legend{}
\addlegendentry{GC+ Empirical (sym.)}
\addlegendentry{GC+ Theoretical (sym.)}
\addlegendentry{GC+ Empirical (asym.)}
\addlegendentry{GC+ Theoretical (asym.)}
\addlegendentry{HEDGES Empirical (sym.)}
\addlegendentry{HEDGES Empirical (asym.)}

\end{semilogyaxis}
\end{tikzpicture}
    \end{center}
        \caption{Probability of decoding error of the GC+ and HEDGES~\cite{press2020hedges} codes for a binary input message of length $168$ bits under i.i.d. edit errors. The GC+ and HEDGES codes have lengths $128$ and $176$~NTs, with information densities $\rho \approx 1.31$ and $\rho \approx 0.95$~bits/NT, respectively. Code parameters and edit error proportions (“sym.” and “asym.”) are specified in the text. Empirical results are averaged over $2\times 10^6$ runs for GC+ and $10^5$ runs for HEDGES.}
        

    \label{fig4}
    \vspace{-0.5cm}
\end{figure}

\begin{figure}[h!]
\begin{center}
\begin{tikzpicture}[scale=0.85]
\hspace{-0.2cm}
\begin{semilogyaxis}[
    legend cell align={left},
    tick scale binop=\times,
    xlabel={Window Length $w$ },
    ylabel={Probability of Decoding Error},
    grid=both,
    xtick={2,3,4,5},
    ytick={0.00001, 0.0001, 0.001, 0.01, 0.1, 1},
    legend style={at={(0.7,0.085)},anchor=west,font=\scriptsize},
    xmajorgrids=true,
    xminorgrids=false,
    ymajorgrids=true,
    yminorgrids=true,
    major grid style={black!10}, 
    minor grid style={dotted,black!15}, 
    ymax=1,
    ymin=0.00001,
    xmin=2,
    xmax=5,
]

\addplot+[mark=triangle,mark size=1.5, mark options={fill=none},line width=0.6pt] coordinates {
    (2,5.80E-05)
    (3,6.40E-05)
    (4,7.10E-05)
    (5,8.30E-05)
};

\addplot+[mark=o,mark size=1.5, mark options={fill=none},line width=0.6pt] coordinates {
    (2,1.40E-01)
    (3,2.30E-01)
    (4,3.70E-01)
    (5,5.70E-01)
};

\node at (4.23cm,-2.44cm) {
\resizebox{0.27\textwidth}{!}{%
         \begin{tabular}{|c|c|c|c|c|c|}
\hline
\multirow{2}{*}{$w$} & \multicolumn{3}{c|}{GC+} & \multicolumn{2}{c|}{HEDGES} \\ \cline{2-6} 
 & Length &  $\varepsilon_{av}$ & $\rho$ & $\varepsilon_{av}$ & $\rho$ \\ \hline
 $2$ & $108$ & $1.8\%$ & $1.56$ & $1.1\%$ & $0.95$ \\ \hline
 $3$ & $111$ & $2.7\%$ & $1.51$ & $1.7\%$ & $0.95$ \\ \hline
 $4$ & $114$ & $3.5\%$ & $1.47$ & $2.3\%$ & $0.95$ \\ \hline
 $5$ & $117$ & $4.2\%$ & $1.44$ & $2.8\%$ & $0.95$ \\ \hline
\end{tabular} }
    };

\legend{}
 \addlegendentry{GC+}
 \addlegendentry{HEDGES}
\end{semilogyaxis}
\end{tikzpicture}
    \end{center}
    \caption{Empirical probability of decoding error of the GC+ and HEDGES~\cite{press2020hedges} codes for a binary input message of length $168$ bits under localized edit errors with varying window length $w$ (channel parameter). The edit probability is fixed at $P_{\text{edit}}=99\%$ with $P_s=P_d=P_i=33\%$. The same HEDGES code of length 176 NTs as in Fig.~\ref{fig4} is evaluated. The GC+ code parameters are specified in the text. Reported values include information densities $\rho$ (bits/NT), average edit rates $\varepsilon_{av}$, and the GC+ code lengths (in NTs) for different $w$. Results are averaged over $10^5$ runs for GC+ and $10^3$ runs for HEDGES.}
    \label{fig5}
    \vspace{-0.5cm}
\end{figure}

\subsection{GC+ as Inner Code in DNA Storage} \label{sec:gc_rs}
Following the workflow in Fig.~\ref{fig1}, we consider synthetic binary data of size $1.68$ Mb as input, where each bit is generated independently according to the $\text{Bernoulli}(0.5)$ distribution. The data is partitioned into $10^4$ non-overlapping fragments of length $168$ bits. These fragments are encoded using an outer systematic RS code over $\F_{2^{14}}$ of rate $R_{\text{out}}$, producing \mbox{$10^4(R_{\text{out}}^{-1}-1)$} additional ``parity'' fragments of the same size. Each fragment is then encoded with an inner GC+ code of rate $R_{\text{in}}$. The resulting binary GC+ codewords are mapped into DNA using the 2 bits/NT binary-to-quaternary conversion, yielding $10^4R_{\text{out}}^{-1}$ oligos, each of length $84R_{\text{in}}^{-1}$ NTs. 

We apply the same channel model as in previous sections, with i.i.d. edits over the quaternary alphabet and error proportions motivated by DNA storage\cite{gimpel2023digital} ($P_s = 0.53P_{\text{edit}}, P_d = 0.45P_{\text{edit}}, P_i = 0.02P_{\text{edit}}$). To assess error-correction performance in isolation, we assume that each oligo is read exactly once and that reads are ordered, thereby omitting other retrieval modules such as clustering and reconstruction (e.g., alignment and consensus), which are typically required in practice when handling multiple unordered reads at the sequencer output.

Decoding proceeds in two stages. Each oligo is first decoded using the inner GC+ code. If decoding fails, the oligo is flagged and treated as an erasure by the outer RS code. Otherwise, the output of the inner decoder is passed directly to the outer RS code, possibly containing substitution errors at the RS-symbol level in $\F_{2^{14}}$ due to GC+ miscorrections. The outer RS code is then responsible for correcting both erasures and substitutions across all fragments. Data retrieval succeeds if these erasures and substitutions fall within the error-correction capability of the RS code.

With this setup in place, we analyzed the optimal trade-offs between the rates of the inner and outer codes by evaluating the maximum achievable information density $\rho=2R_{\text{in}}R_{\text{out}}$ at which data retrieval still succeeded with 100\% success rate over ten independently and randomly generated datasets of size 1.68~Mb. Specifically, we considered several configurations of the inner GC+ code given in Fig.~\ref{figa1} and, for each one, performed a grid search over the feasible outer RS rates to determine $R^*_{\text{out}}$ that maximizes the information density $\rho^*=2R_{\text{in}}R^*_{\text{out}}$ for every value of $P_{\text{edit}}$. 

The results are summarized in Fig.~\ref{figa}. Fig.~\ref{figa3} plots the maximum achievable information densities for the inner code configurations in Fig.~\ref{figa1} as a function of $P_{\text{edit}}$. The ``uncoded'' case corresponds to $R_{\text{in}}=1$, where oligos of incorrect length are flagged as erasures and only correctly sized oligos are passed to the RS code. As shown in Fig.~\ref{figa4}, in the low-error regime ($P_{\text{edit}}\leq 0.6\%$), the highest information density is attained without inner coding, i.e., by allocating all redundancy to the outer RS code. For higher edit rates, however, GC+ inner codes enable strictly better information densities, with the codes achieving the best densities among those considered being: code~$\mathrm{C}$ for $P_{\text{edit}}\in\{0.7\%,0.8\%,0.9\%,1.0\%\}$, and code~$\mathrm{D}$ for $P_{\text{edit}}\in\{1.1\%,1.2\%,1.3\%,\ldots,1.5\%\}$. These results demonstrate that GC+ inner coding enables reliable data retrieval at higher information densities when edit errors are significant, which directly translates into lower synthesis cost per information bit.

\begin{figure}[h!]
    \centering
        \subfloat[Inner GC+ code configurations.]{
        \centering
        \resizebox{0.4\textwidth}{!}{
            \begin{tabular}{ccccc}
            \hline
            GC+ Code & Rate $R_{\text{in}}$ & Oligo length&  $c_1$ & $c_2$ \\ \hline
            $\mathrm{A}$ & $0.81$ & $104$ & $2$ & $1$ \\ 
            $\mathrm{B}$ & $0.78$ & $108$ & $3$ & $1$ \\ 
            $\mathrm{C}$ & $0.75$ & $112$ & $4$ & $1$ \\ 
            $\mathrm{D}$ & $0.72$ & $116$ & $5$ & $1$ \\ 
            $\mathrm{E}$ & $0.70$ & $120$ & $6$ & $1$ \\ 
            $\mathrm{F}$ & $0.68$ & $124$ & $7$ & $1$ \\  \hline
            \end{tabular} } 
        \label{figa1} } \\ \vspace{0.4cm}
            \subfloat[Optimal information densities resulting from inner + outer codes.]{
            \centering
            \resizebox{0.45\textwidth}{!}{
            \begin{tikzpicture}[scale=1]
    \begin{axis}[
        legend cell align={left},
        tick scale binop=\times,
        xlabel={Edit Probability $P_{\text{edit}}$ },
        ylabel={Information Density $\rho$ (bits/NT)},
        cycle list={
                {red},
                {customgreen},
                {purple},
                {blue},
                {orange},
                {custombrown},
                {black},
                {yellow}
            },
                            grid=both,
        xtick={0.005,  0.006, 0.007, 0.008, 0.009, 0.010, 0.011, 0.012, 0.013, 0.014, 0.015},
       ytick={0.9,1,1.1,1.2,1.3,1.4,1.5, 1.6},
        legend pos=south west,
        legend style={font=\scriptsize},
        xmajorgrids=true,
        xminorgrids=true,
        ymajorgrids=true,
        yminorgrids=true,
    minor xtick={0.006,0.008,0.010,0.012,0.014}, 
    minor ytick={0.925,0.95,0.975,1.025,1.05,1.075,1.125,1.15,1.175,1.225, 1.250, 1.275, 1.325, 1.350, 1.375, 1.425, 1.450, 1.475, 1.525, 1.550, 1.575},
        major grid style={black!15}, 
        minor grid style={dotted,black!40}, 
        ymax=1.6,
        ymin=0.98,
                           xmin=0.005,
   		 xmax=0.015,
    ]

    \addplot+[mark=o,mark options={fill=none},mark size=2,line width=0.5pt] coordinates {		
(0.005,1.4134) 
(0.006,1.3487) 
(0.007,1.2731) 
(0.008,1.1953)
(0.009,1.1120) 
(0.010,1.0257)
    };
    
    \addplot+[mark=square,mark options={fill=none},mark size=2,line width=0.5pt] coordinates {    				
(0.005,1.4543) (0.006,1.4154) (0.007,1.3766) (0.008,1.3300)
(0.009,1.2755) (0.010,1.2155) (0.011,1.1511) (0.012,1.0811)
(0.013,1.0111)
    };
    
    \addplot+[mark=pentagon,mark options={fill=none},mark size=2,line width=0.5pt] coordinates { 
(0.005,1.4787) (0.006,1.4683) (0.007,1.4549) (0.008,1.4395)
(0.009,1.4207) (0.010,1.3999) (0.011,1.3757) (0.012,1.3499)
(0.013,1.3212) (0.014,1.2950) (0.015,1.2742)
    };
    
    \addplot+[mark=diamond,mark options={fill=none},mark size=2,line width=0.5pt] coordinates { 
(0.005,1.4356) (0.006,1.4294) (0.007,1.4215) (0.008,1.4102)
(0.009,1.3967) (0.010,1.3806) (0.011,1.3636) (0.012,1.3435)
(0.013,1.3203) (0.014,1.2966) (0.015,1.2727)
    };
    
    \addplot+[mark=triangle,mark options={fill=none},mark size=2,line width=0.5pt] coordinates {
(0.005,1.3977) (0.006,1.3963) (0.007,1.3940) (0.008,1.3911)
(0.009,1.3872) (0.010,1.3827) (0.011,1.3772) (0.012,1.3710)
(0.013,1.3632) (0.014,1.3548) (0.015,1.3476)
    };
    
        \addplot+[mark=mystar,mark options={fill=none},mark size=2,line width=0.5pt] coordinates {
(0.005,1.3532) (0.006,1.3528) (0.007,1.3517) (0.008,1.3504)
(0.009,1.3484) (0.010,1.3460) (0.011,1.3426) (0.012,1.3387)
(0.013,1.3341) (0.014,1.3299) (0.015,1.3263)
    };
    
       \addplot+[mark=10-pointed star,mark size=2,line width=0.5pt, densely dashdotted] coordinates {	  
(0.005,1.5679)
(0.006,1.4920)
(0.007,1.4220)
(0.008,1.3540)
(0.009,1.2920)
};
    
    \legend{}
    \addlegendentry{Code $\mathrm{A}$}
     \addlegendentry{Code $\mathrm{B}$}
     \addlegendentry{Code $\mathrm{C}$}
     \addlegendentry{Code $\mathrm{D}$}
     \addlegendentry{Code $\mathrm{E}$}
     \addlegendentry{Code $\mathrm{F}$}
         \addlegendentry{Uncoded}
    \end{axis}
    \end{tikzpicture} }
        \label{figa3} } \\ \vspace{0.4cm}
    \subfloat[Inner codes yielding highest information densities.]{
            \centering
            \resizebox{0.45\textwidth}{!}{
\begin{tikzpicture}[scale=1]
\begin{axis}[
    legend cell align={left},
    tick scale binop=\times,
    xlabel={Edit Probability $P_{\text{edit}}$ },
    ylabel={Information Density $\rho$ (bits/NT)},
    cycle list={
            {red},
            {blue},
            {orange},
            {custombrown},
            {pink},
            {gray}
        },
    grid=both,
    xtick={0.005,  0.006, 0.007, 0.008, 0.009, 0.010, 0.011, 0.012, 0.013, 0.014, 0.015},
    ytick={1.30,1.35, 1.40,1.45, 1.50, 1.55,1.60},
    legend pos=south west,
        minor xtick={0.006,0.008,0.010,0.012,0.014}, 
    minor ytick={1.325, 1.350, 1.375, 1.425, 1.450, 1.475, 1.525, 1.550, 1.575},
    legend style={font=\scriptsize},
    xmajorgrids=true,
    xminorgrids=true,
    ymajorgrids=true,
    yminorgrids=true,
    major grid style={black!15}, 
    minor grid style={dotted,black!40}, 
    ymax=1.6,
    ymin = 1.28,
                       xmin=0.005,
   		 xmax=0.015,
]

\addplot+[only marks,mark=10-pointed star,mark size=2, color=black] coordinates {
(0.005,1.5679)
(0.006,1.4920)
};

\addplot+[only marks, mark=pentagon*,mark size=2, color=purple] coordinates {
(0.007,1.4549) (0.008,1.4395)
(0.009,1.4207) (0.010,1.3999)
};

\addplot+[only marks, mark=triangle*,mark size=2] coordinates {
(0.011,1.3772) (0.012,1.3710)
(0.013,1.3632) (0.014,1.3548) (0.015,1.3476)
};
%

\legend{}
\addlegendentry{Uncoded}
\addlegendentry{Code $\mathrm{C}$}
 \addlegendentry{Code $\mathrm{E}$}
\end{axis}
\end{tikzpicture} }
    \label{figa4} }
        \caption{Maximum achievable information densities $\rho$ obtained from combining the inner GC+ codes in (a) with an outer RS code. For each inner code configuration, the outer code rate that maximizes $\rho$ while ensuring 100\% data retrieval success over ten independently generated random datasets of size 1.68~Mb was determined using a grid search. Subfigure (b) shows the resulting densities as a function of $P_{\text{edit}}$, and (c) highlights the inner codes achieving the highest overall $\rho$ for each $P_{\text{edit}}$. For all GC+ codes in (a), the check parity is protected with a quaternary $(12,4)$ code, and the shared parameters are $\ell=8$, with $\lambda=1$ for $|\Delta|\leq 1$ and $\lambda=0$ otherwise.}
        \label{figa}
\end{figure}

\section{Conclusion}
In this work, we introduced the GC+ code, a systematic binary code designed to correct edit errors at short code lengths suitable for DNA storage applications. We evaluated its performance both theoretically and empirically. In the DNA storage context, we focused on a simplified setting with minimal read costs, assuming one read per oligo. Some interesting directions for future research include:
\begin{itemize}[leftmargin=*]
    \item Improving the decoding complexity of the GC+ code, particularly for the {\em general check} step. In~\cite{ma2021maximum}, the authors provided valuable insights on achieving fast decoding with the original version of Guess \& Check codes~\cite{GC} using maximum-likelihood estimation of deletion patterns on trellis graphs. It would be interesting to investigate similar approaches for edit correction.
    \item Considering other mathematical edit error models or noise simulators based on real data for {\em in silico} studies, in addition to conducting {\em in vitro} wet-lab experiments.
    \item Extending the results in Section~\ref{sec:gc_rs} by studying the achievable trade-offs between the rate of the inner GC+ code and the rate of the outer RS code in the presence of multiple reads per oligo.
    \item Designing quaternary edit correcting codes where both the encoding and decoding are performed over the DNA alphabet.
\end{itemize}




\section*{Acknowledgment} 
This work was supported by the French government through the France 2030 investment plan managed by the National Research Agency (ANR), as part of the Initiative of Excellence Université Côte d’Azur under reference number ANR-15-IDEX-01. 

\bibliographystyle{IEEEtran}
\bibliography{Refs}

\appendices
\section{Proof of Lemma~\ref{prop1}} \label{appA}
To prove Lemma~\ref{prop1}, we count the number of solution vectors \mbox{$\boldsymbol{\delta}=(\delta_1,\delta_2,\ldots,\delta_{N'})\in \mathbb{Z}^{N'}$} satisfying the equation 
\begin{equation} \label{aeq1}
    \delta_1+\delta_2+\ldots+\delta_{N'}=\Delta,
\end{equation}
subject to the constraints $\| \boldsymbol{\delta} \|_0 \leq c_1$, and \mbox{$\| \boldsymbol{\delta} \|_1 \leq \left \lvert \Delta \right \rvert + 2\lambda$}, where  $\Delta \in \mathbb{Z}$, $c_1 \in \mathbb{N}$, and $\lambda \in \mathbb{N}$. Due to the symmetry of the counting problem with respect to $\Delta$, we have $\left \lvert \mathcal{P}_2(\Delta,N',c_1,\lambda) \right \rvert = \left \lvert \mathcal{P}_2(-\Delta,N',c_1,\lambda) \right \rvert$. Hence, without loss of generality, we assume that $\Delta\geq 0$.

The condition $\| \boldsymbol{\delta} \|_0 \leq c_1$ restricts the number of non-zero elements in any solution vector $\boldsymbol{\delta}$ to a maximum of $c_1$. Let \mbox{$i_1\in \{0,1,\ldots,c_1\}$} be the number of non-zero elements in \mbox{$\boldsymbol{\delta}=(\delta_1,\delta_2,\ldots,\delta_{N'})$}. For a given $i_1$, there are $\binom{N'}{i_1}$ choices for selecting the indices of the non-zero elements.

The condition $\| \boldsymbol{\delta} \|_1 \leq \left \lvert \Delta \right \rvert + 2\lambda$, combined with \eqref{aeq1}, limits the sum of the negative elements in any solution vector~$\boldsymbol{\delta}$ to a minimum value of $-\lambda$. To prove this claim, assume for the sake of contradiction that there exists a solution vector $\boldsymbol{\delta}$ such that \mbox{$\sum_{i : \delta_i<0} \delta_i <-\lambda$}. Then, we have
\begin{align}
    \| \boldsymbol{\delta} \|_1  &= \sum_{i : \delta_i\geq0} \left \lvert \delta_i \right \rvert + \sum_{i : \delta_i<0} \left \lvert \delta_i \right \rvert, \label{aeq3}\\
    &= \sum_{i : \delta_i\geq0} \delta_i -\sum_{i : \delta_i<0} \delta_i \label{aeq4}, \\
    &= \Delta - 2 \sum_{i : \delta_i<0} \delta_i \label{aeq5}, \\
    &>  \Delta + 2\lambda, \label{aeq6}
\end{align}
where \eqref{aeq5} follows from \eqref{aeq1}, and \eqref{aeq6} follows from the assumption. The inequality in \eqref{aeq6} violates the condition \mbox{$\| \boldsymbol{\delta} \|_1 \leq \left \lvert \Delta \right \rvert + 2\lambda$}, thereby proving the claim by contradiction. Furthermore, since the sum of the negative elements is at least $-\lambda$, it also follows that the number of negative elements in any solution vector~$\boldsymbol{\delta}$ is at most $\lambda$.

Let \mbox{$i_2\in \{0,1,\ldots,\lambda\}$} be the number of negative elements in a given solution vector $\boldsymbol{\delta}$ among the $i_1$ non-zero elements. For given values of $i_1$ and $i_2$, there are $\binom{i_1}{i_2}$ choices for selecting the indices of the negative elements. Let \mbox{$i_3\in \{-\lambda, -\lambda +1, \ldots, 0\}$} be the sum of these \mbox{$i_2$} negative elements. Next, for given values of $i_1$, $i_2$, and $i_3$, we count the number of solutions of the following two equations:
\begin{align}
    \sum_{i : \delta_i<0} \delta_i &= i_3, \label{aeq7}\\
    \sum_{i : \delta_i>0} \delta_i &= \Delta - i_3, \label{aeq8}
\end{align}
where $\left \lvert \{ i : \delta_i<0 \} \right \rvert=i_2$ and $\left \lvert \{ i : \delta_i>0 \}\right \rvert=i_1-i_2$. The number of solutions of \eqref{aeq8} corresponds to the number of integer compositions of $\Delta-i_3$ into $i_1-i_2$ parts, which is given by $\binom{\Delta-i_3-1}{i_1-i_2-1}$~\cite{graham1994concrete}. Similarly, since $i_3$ and all the summands in \eqref{aeq7} are negative, the number of solutions of \eqref{aeq7} is equivalent to the number of integer compositions of $-i_3$ into $i_2$ parts, given by $\binom{-i_3-1}{i_2-1}$.

To finalize the proof, we multiply all possible combinations and sum over all values of $i_1$, $i_2$, and $i_3$ to obtain 
\begin{equation} \label{aeq9}
    \left \lvert \mathcal{P}\right \rvert = \sum_{i_1=0}^{c_1}  \sum_{i_2=0}^{\lambda}  \sum_{i_3=-\lambda}^{0} \binom{N'}{i_1} \binom{i_1}{i_2} \binom{-i_3-1}{i_2-1} \binom{\Delta -i_3-1}{i_1-i_2-1},
\end{equation}
for $\Delta\geq 0$. The expression in Lemma~\ref{prop1} follows immediately from \eqref{aeq9} by applying symmetry with respect to $\Delta$ and a simple variable transformation for $i_3$. 

\vspace{0.2cm}
To upper bound $\left \lvert \mathcal{P}_{\text{general}}\right \rvert$, define the feasible support caps for $i_1$ and $i_2$:
\begin{align*}
i_1^{*}&\triangleq \min\{c_1,\ \max\{\lambda+1,\ |\Delta|+2\lambda\}\}, \\
i_2^{*}&\triangleq  \min\{\lambda,\,i_1\}.
\end{align*}
We have
\begin{align}
|\mathcal {P}|
&=\hspace{-0.1cm} \sum_{i_1=0}^{c_1} \sum_{i_2=0}^{\lambda} \sum_{i_3=0}^{\lambda}
\hspace{-0.1cm} \binom{N'}{i_1}\binom{i_1}{i_2}\binom{i_3-1}{i_2-1}
\binom{|\Delta|+i_3-1}{i_1-i_2-1} \nonumber\\
&=\hspace{-0.1cm} \sum_{i_1=1}^{i_1^{*}} \sum_{i_2=0}^{i_2^{*}} \sum_{i_3=0}^{\lambda}
\hspace{-0.1cm} \binom{N'}{i_1}\binom{i_1}{i_2}\binom{i_3-1}{i_2-1}
\binom{|\Delta|+i_3-1}{\,i_1-i_2-1\,} \tag{a}\\
&\hspace{-0.1cm} \le \sum_{i_1=1}^{i_1^{*}} \sum_{i_2=0}^{i_2^{*}} \sum_{i_3=0}^{\lambda}
\hspace{-0.1cm} \binom{N'}{i_1}\binom{i_1}{i_2}\binom{i_3-1}{i_2-1}\,
\hspace{-0.1cm} \underbrace{\binom{|\Delta|+\lambda-1}{\,i_1-i_2-1\,}}_{\le\ 2^{|\Delta|+\lambda-1}}
\tag{b}\\
&\le 2^{\,|\Delta|+\lambda-1}
\sum_{i_1=1}^{i_1^{*}}\binom{N'}{i_1}
\sum_{i_2=0}^{i_2^{*}}\binom{i_1}{i_2}
\underbrace{\sum_{i_3=0}^{\lambda}\binom{i_3-1}{i_2-1}}_{=\ \tbinom{\lambda}{i_2}}
\tag{c}\\
&= 2^{\,|\Delta|+\lambda-1}
\sum_{i_1=1}^{i_1^{*}}\binom{N'}{i_1}
\underbrace{\sum_{i_2=0}^{\min\{\lambda,i_1\}}\binom{i_1}{i_2}\binom{\lambda}{i_2}}_{=\ \tbinom{i_1+\lambda}{i_1}}
\tag{d}\\
&\le 2^{\,|\Delta|+\lambda-1}
\sum_{i_1=1}^{i_1^{*}}
\underbrace{\binom{N'}{i_1}}_{\le \tfrac{(N')^{i_1}}{i_1!}}\;
\underbrace{\binom{i_1+\lambda}{i_1}}_{\le \tfrac{(\lambda+i_1)^{i_1}}{i_1!}}
\tag{e}\vspace{0.5cm} \\
&\le 2^{\,|\Delta|+\lambda-1}\;
i_1^{*}\;
\frac{(N')^{i_1^{*}}\,(\lambda+i_1^{*})^{i_1^{*}}}{(i_1^{*}!)^{2}}.
\tag{f}
\end{align}

\newpage
\noindent \textbf{Justification of the labeled steps}:
\begin{itemize}
\item[(a)] (\emph{Gate / feasibility}): For a nonzero summand there must exist $i_2\leq i_1$, and \(i_3\!\le\!\lambda\), with \(i_2\!\le\!i_3\) and \(i_1-i_2\!\le\!|\Delta|+i_3\), hence \(i_1\!\le\!\max\{\lambda+1,|\Delta|+2\lambda\}\). Terms with \(i_1>i_1^{*}\) vanish, so the outer sum may be truncated exactly. We then drop the nonnegative \(i_1=0\) term (all-zero pattern), yielding “\(\le\)” and starting at \(i_1=1\).
\item[(b)] (\emph{Replace \(i_3\) by \(\lambda\)}): The map \(i_3\mapsto \binom{|\Delta|+i_3-1}{\,i_1-i_2-1\,}\) is nondecreasing, so its maximum over \(i_3\in[0,\lambda]\) is attained at \(i_3=\lambda\).
\item[(c)] (\emph{Apply \( \binom{a}{b}\le 2^{a}\)}): Apply the standard upper bound \(\binom{|\Delta|+\lambda-1}{\cdot}\le 2^{|\Delta|+\lambda-1}\).
\item[(d)] (\emph{Hockey-stick identity on \(i_3\)}): \(\sum_{i_3=0}^{\lambda}\binom{i_3-1}{i_2-1}=\binom{\lambda}{i_2}\), with \(\binom{-1}{-1}=1\) when \(i_2=0\).
\item[(e)] (\emph{Vandermonde identity on \(i_2\)}): \(\sum_{i_2}\binom{i_1}{i_2}\binom{\lambda}{i_2}=\binom{i_1+\lambda}{i_1}\).
\item[(f)] (\emph{Binomial bounds and collapse sum}): Use \(\binom{n}{k}\le n^{k}/k!\) on both \(\binom{N'}{i_1}\) and \(\binom{i_1+\lambda}{i_1}\). The remaining finite sum over \(i_1\in\{1,\dots,i_1^{*}\}\) is upper-bounded by \(i_1^{*}\) times its largest-index term. Note that $N'=K+c_1$ and $i_1^{*}\leq c_1$, so \(i_1^{*}\) always maximizes $\binom{N'}{i_1}$.
\end{itemize}

\section{Proof of Claim~\ref{jointE2}} \label{appB}
Fix integers $z\in[0,\ell N']$ and $\Delta\in[-\ell N',\ell N']$. Consider the following two cases.
\paragraph{Case 1, \(\Delta\ge 0\)}
Here the cumulative offset is non–negative, so
\[
  \tilde{\rv D}^{+}= \Delta+z,\qquad
  \tilde{\rv D}^{-}=z,\qquad
  \tilde{\rv Z}=z .
\]

Choose exactly \(j\) indices (out of \(N'\)) whose offsets are strictly negative.  
The number of such choices is \(\binom{N'}{j}\).  
For the chosen indices the magnitudes must sum to \(z\); the probability of that event is the \(z\)-th entry of the \(j\)-fold convolution \(\bar{\q}^{(*j)}\).  
For the remaining \(N'-j\) indices the offsets are non–negative and must sum up to exactly \(\Delta+z\); the probability of that is the \((\Delta+z)\)-th entry of \(\p^{(*\,N'-j)}\).  
Multiplying and summing over all values of \(j\) gives the asserted formula. The feasible range of \(j\) is
\begin{itemize}
  \item \emph{Lower bound:} each negative offset contributes at most \(\ell\), hence \(j\ge\lceil z/\ell\rceil\);
  \item \emph{Upper bound 1:} at least one unit of magnitude is contributed by each of the \(j\) negative offsets, so \(j\le z\);
  \item \emph{Upper bound 2:} the remaining \(N'-j\) offsets must be able to generate total magnitude \(\Delta+z\le\ell(N'-j)\), giving
        \(j\le N'-\lceil(\Delta+z)/\ell\rceil\).
\end{itemize}
Therefore, $j_{\min}=\lceil z/\ell\rceil$ and taking the minimum of the two upper bounds yields $j_{\max}$.

\paragraph{Case 2, \(\Delta<0\)}
Now \(\tilde{\rv D}^{-}= |\Delta|+z\), \(\tilde{\rv D}^{+}=z\).  
The same argument with the roles of \(\p\) and \(\q\) interchanged, and the same feasible range of \(j\), produces the second part of the claim.

Since the two cases are exhaustive and disjoint, the expression in the claim gives the exact joint PMF of \((\tilde{\rv Z},\tilde{\rv D})\).\hfill$\square$

\section{Upper Bound on the Probability of Event $\mathcal{E}_3$ for a Repetition Code with Sliding-Window Decoding} \label{appC}

\subsection*{1. Preliminaries}
Consider a $t$-fold repetition code that encodes a message $\tilde{\u} \in \mathbb{F}_2^{\tilde{k}}$ into a codeword $\tilde{\x} \in \mathbb{F}_2^{\tilde{n}}$, where each bit $\tilde{u}_b$, $b \in [\tilde{k}]$, is repeated $t$ times. The resulting codeword has length $\tilde{n} = t \tilde{k}$ and consists of $\tilde{k}$ {\em blocks} of size $t$:
\begin{equation} \label{eqApp1}
  \tilde{\x}=(\tilde{x}_1,\ldots,\tilde{x}_{\tilde{n}})=(\underbrace{\overbracket{\tilde{u}_1,\ldots,\tilde{u}_1}^{\text{block 1}}}_{t\text{ times}}\, , \ldots \ldots \, ,
  \underbrace{\overbracket{\tilde{u}_{\tilde{k}},\ldots,\tilde{u}_{\tilde{k}}}^{\text{block }\tilde{k}}}_{t\text{ times}}).
\end{equation}

In the GC+ code construction, the message $\tilde{\u}$ corresponds to the $\tilde{k} = c_2 \ell$ check parity bits, and the associated repetition codeword $\tilde{\x}$ is appended as a suffix of the GC+ codeword. Let $\x$ be an input GC+ codeword to the edit channel described in Section~\ref{model}, and let $\y$ be the corresponding output. We consider a sliding-window decoding approach in which, to recover the check parities, the decoder scans the last $\tilde{n}$ bits of $\y$ and applies majority voting over $\tilde{k}$ non-overlapping windows, each of length $t$. Due to potential deletions and insertions, these windows may be misaligned with the true block boundaries. Consequently, each window may include extraneous bits from neighboring repetition blocks or from outside the suffix region (i.e., bits not belonging to the repetition code).

To analyze the performance of sliding-window decoding of the repetition suffix, we consider a simplified yet equivalent scenario. Specifically, we assume that a repetition codeword $\tilde{\x}$ of the form given in~\eqref{eqApp1} is sent in isolation through the same edit channel, followed by additional random bits appended to its right, resulting in an output $\tilde{\y}$. These appended bits model the possibility that the decoder reads beyond the actual suffix boundary. We further assume the bits of $\tilde{\u}$ and the appended bits are i.i.d. Bernoulli$(1/2)$.\footnote{In Appendix C, we treat $\tilde{\u}$, $\tilde{\x}$, and $\tilde{\y}$ as random vectors, though we do not use the sans-serif notation for them to avoid notational clutter.}

To estimate $\tilde{u}_b$, the decoder applies a majority vote over the bits of $\tilde{\y}$ whose positions lie within the window
\begin{equation}
  \mathcal{W}^{(b)} \;\triangleq \; [(b-1)t+1,bt], \quad b\in [\tilde{k}].
\end{equation}
Note that these decoding windows remain fixed for any $\tilde{\y}$, which can lead to decoding errors due to misalignment. Define the random variable
\begin{equation}
\rv{S}^{(b)} \triangleq \left \lvert \{ i \in \mathcal{W}^{(b)} : \tilde{y}_i = \tilde{u}_b \} \right \rvert \in [0,t],
\end{equation}
as the number of {\em correct} votes for block $b\in [\tilde{k}]$, i.e., the number of bits of $\tilde{\y}$ in the decoding window $\mathcal{W}^{(b)}$ that match the correct value $\tilde{u}_b$. A decoding error occurs if \mbox{$\rv{S}^{(b)} < \lceil (t+1)/2 \rceil$} for any $b \in [\tilde{k}]$. To upper-bound $\Pr(\mathcal{E}_3)$, we apply the union bound:
\begin{equation} \label{eqU}
  \Pr \Big(\bigcup_{b=1}^{\tilde{k}} \big\{ \rv{S}^{(b)} < \left \lceil \tfrac{t+1}{2} \right \rceil \big\} \Bigr)
  \le
  \sum_{b=1}^{\tilde{k}} \Pr \Big(\rv{S}^{(b)} < \left \lceil \tfrac{t+1}{2} \right \rceil \Big).
\end{equation}

\subsection*{2. Restricting Analysis to Three Blocks $\{b-1,b,b+1\}$}

To simplify the analysis of $\rv{S}^{(b)}$ for a given $b\in [\tilde{k}]$, we restrict attention to the bits of $\tilde{\x}$ belonging to block $b$ itself, and its immediate neighbors, blocks $b-1$ and $b+1$. While bits from more distant blocks (e.g., block~$b-2$ or $b+2$) could theoretically fall into $\mathcal{W}^{(b)}$ in $\tilde{\y}$ due to large offsets and contribute to $\rv{S}^{(b)}$, the probability of such occurrences is typically small. Furthermore, since $\rv{S}^{(b)}$ by definition counts the number of {\em correct} votes, ignoring the contribution of distant blocks can only decrease $\rv{S}^{(b)}$, thereby leading to an upper bound on the decoding error probability. Note that:
\begin{itemize}
\item \emph{If $b=1$}, block $(b-1)$ does not exist, so the analysis includes only blocks $1$ and $2$.
\item \emph{If $b = \tilde{k}$}, block $(\tilde{k} + 1)$ lies beyond the repetition code.  In this case, block $(\tilde{k}+1)$ is treated as consisting of $t$ i.i.d. bits, each matching $\tilde{u}_{\tilde{k}}$ with probability $0.5$.  
\end{itemize}
\subsection*{3. Law of Total Probability Over Four Cases} \label{polarity}
Since the message bits $\tilde{u}_{b-1}$, $\tilde{u}_b$, and $\tilde{u}_{b+1}$ are assumed to be i.i.d. Bernoulli(0.5), we have
\begin{equation}
  \Pr(\tilde{u}_{b-1} = \tilde{u}_b) = \Pr(\tilde{u}_{b+1} = \tilde{u}_b) = 0.5.
\end{equation}
We thus break down the analysis into four equiprobable cases:
\begin{enumerate}
\item {\bf Case A}: $\tilde{u}_{b-1}=\tilde{u}_b$, $\tilde{u}_{b+1}=\tilde{u}_b$,
\item {\bf Case B}: $\tilde{u}_{b-1}\neq \tilde{u}_b$, $\tilde{u}_{b+1}=\tilde{u}_b$,
\item {\bf Case C}: $\tilde{u}_{b-1}=\tilde{u}_b$, $\tilde{u}_{b+1}\neq \tilde{u}_b$,
\item {\bf Case D}: $\tilde{u}_{b-1}\neq \tilde{u}_b$, $\tilde{u}_{b+1}\neq \tilde{u}_b$.
\end{enumerate}
Within each case, all bits from block $b-1$ are either identical to $\tilde{u}_b$ (if $\tilde{u}_{b-1} = \tilde{u}_b$) or flipped relative to it (if $\tilde{u}_{b-1} \neq \tilde{u}_b$), and similarly for block $b+1$. These cases allow us to condition on how a neighboring block contributes to $\rv{S}^{(b)}$ if any of its bits fall into $\mathcal{W}^{(b)}$ in $\tilde{\y}$. Each case is weighted by $1/4$ in the final decoding error probability via the law of total probability.

\subsection*{4. Recurrence Relation}
We evaluate the PMF of $\rv{S}^{(b)}$ by formulating a recurrence relation that tracks the evolution of the correct vote count as the bits from blocks $\{b-1, b, b+1\}$ in $\tilde{\x}$ are sequentially processed through the edit channel. 
\paragraph{Definitions}
For a block $b\in[\tilde{k}]$, define
\begin{equation}
m_b \;\triangleq\;
\begin{cases}
(b-2)t,& b\geq 2,\\
0,& b<2,
\end{cases}\;;
\quad
l_b \;\triangleq\;
\begin{cases}
2t,& b=1,\\
3t,& b\ge 2,
\end{cases}\;,
\end{equation}
where $m_b$ is the number of bits in $\tilde{\x}$ that precede the start of block $b-1$, and $l_b$ is the total number of bits in blocks~\mbox{$\{b-1, b, b+1\}$}. We focus on the contribution of bits in the segment $\tilde{\x}_{[m_b+1,m_b+l_b]}$ to the correct vote count~$\rv{S}^{(b)}$. The analysis proceeds bit by bit, with the {\em local} position $i\in [l_b]$ indexing each bit, and $\rv{S}_i^{(b)}$ denoting the total vote count accumulated after processing all bits up to $\tilde{x}_{m_b+i}$ through the channel.

The position of the output corresponding to bit~$\tilde{x}_{m_b+i}$ in~$\tilde{\y}$ depends on the cumulative offset induced by the channel up to that bit. We track this offset through the random variable~\mbox{$\tilde{\rv{D}}_j \in [-j,j]$}, defined as the net number of insertions minus deletions introduced by the channel on the first $j$ bits of $\tilde{\x}$, i.e., $\tilde{\x}_{[1,j]}$. Accordingly, the {\em global} position of the output corresponding to~$\tilde{x}_{m_b+i}$ in~$\tilde{\y}$ is $m_b + i + \tilde{\rv{D}}_{m_b+i-1}$.

For $i \in [l_b]$, $\Delta\in [-(m_b+i),m_b+i]$, and $s\in[0,t]$, define the state function
\begin{equation}
h^{(b)}_{i}(s,\Delta) \triangleq \Pr(\rv{S}_i^{(b)}=s,\tilde{\rv{D}}_{m_b+i}= \Delta),
\end{equation}
which gives the joint probability that, after processing all bits up to $\tilde{x}_{m_b+i}$: {\em (i)}~exactly $s$ correct votes have been accumulated for block~$b$; and {\em (ii)}~the cumulative offset is $\Delta$. The initial condition at $i = 0$ is
\begin{equation} \label{init}
h^{(b)}_0(s,\Delta) = \begin{cases}
\Pr(\tilde{\rv{D}}_{m_b}=\Delta),& s=0,\\
0,& s>0,
\end{cases}\;,
\end{equation}
where the PMF of $\tilde{\rv{D}}_{m_b}\in [-m_b,m_b]$ is given by~\eqref{eqDd} with $\ell$ replaced by $m_b$.

\paragraph{State Transitions}
Suppose that at step $i-1$, the state is $(s_{i-1}, \Delta_{i-1})$, with probability $h_{i-1}^{(b)}(s_{i-1}, \Delta_{i-1})$. The next input bit to the channel at step $i$ is $\tilde{x}_{m_b+i}$, and the corresponding output is $\tilde{y}_{i'}$, where \mbox{$i'\triangleq m_b + i + \Delta_{i-1}$} is the {\em global} position of the output in $\tilde{\y}$. The transition from $(s_{i-1},\Delta_{i-1})$ to $(s_i,\Delta_i)$ depends on the following factors:
\begin{itemize}
\item {\bf Output position}: whether $\tilde{y}_{i'}$ falls into the decoding window~$\mathcal{W}^{(b)}$ of block $b$, defined as  
  \[
  \omega_i \triangleq \mathds{1}_{\{i' \in \mathcal{W}^{(b)}\}},
  \]
  which is deterministic given $i$ and $\Delta_{i-1}$.
\item {\bf Input bit match}: whether the input bit~$\tilde{x}_{m_b+i}$ matches~$\tilde{u}_b$, defined as  
  \[
  \beta_i^{(c)} \triangleq \mathds{1}_{\{\tilde{x}_{m_b+i} = \tilde{u}_b\}},
  \]
  where \(c \in \{\text{A},\text{B},\text{C},\text{D}\}\) denotes the polarity case for bits originating from blocks~$b-1$ and~$b+1$ as described in Section 3 of this appendix. This indicator is deterministic given $i$ and $c$ for all $b\in[\tilde{k}]$, except in the special case $b=\tilde{k}$ (last block), where $\beta_i^{(c)}\sim\text{Bernoulli}(0.5)$ for \mbox{$i\in[2t+1,3t]$}, reflecting the assumption (Section 2) that block $(\tilde{k}+1)$ consists of $t$ uniform i.i.d. bits.
\item {\bf Channel event}: the error event affecting~$\tilde{x}_{m_b+i}$, represented by \mbox{$e\in \{\text{no error}, \text{deletion}, \text{insertion}, \text{substitution}\}$}, which determines the updated offset $\Delta_i$ and the potential impact of $\tilde{y}_{i'}$ on $s_i$.
\end{itemize}
For a given pair $(c,e)$, representing the polarity case and error event, we describe the state update at step $i$ through a mapping
\[
   \tau^{(c,e)}:(i,s_{i-1},\Delta_{i-1})\longmapsto(s_i,\Delta_i),
\]
which depends on the three aforementioned factors. Next, we characterize this mapping for each channel event $e$ in terms of $\omega_i$ and $\beta_i^{(c)}$. For brevity, we write $(s,\Delta)$ and $(s',\Delta')$ as shorthand for $(s_{i-1},\Delta_{i-1})$ and $(s_i,\Delta_i)$, respectively. 
\begin{enumerate}

\item \textbf{No error} (with probability $1-P_d-P_i-P_s$): The channel outputs one bit identical to the input, i.e., \mbox{$\tilde{y}_{i'}=\tilde{x}_{m_b+i}$}. The cumulative offset remains unchanged ($\Delta'=\Delta$). The updated state $(s',\Delta')$ is given by the mapping
\begin{align*}
\tau^{(c,\;\text{no error})}(i,s,\Delta) =
\begin{cases}
\vcenter{\hbox{$(s+1, \Delta)$}}, &
\parbox[t]{2cm}{
if \( \omega_i = 1 \), \\ and \( \beta_i^{(c)} = 1 \),} \vspace{0.1cm} \\
(s, \Delta), & \text{otherwise}.
\end{cases}
\end{align*}

\item \textbf{Deletion} (with probability $P_d$): The channel produces no output. The cumulative offset is decreased by one ($\Delta'=\Delta-1$) and the vote count is unchanged ($s'=s$). The updated state $(s',\Delta')$ is thus given by the mapping
\begin{align*}
\tau^{(c,\;\text{deletion})}(i,s,\Delta)=(s,\Delta-1).
\end{align*}

\item \textbf{Substitution} (with probability $P_s$): The channel outputs one bit, which is the complement of the input bit, i.e., \mbox{$\tilde{y}_{i'}=1-\tilde{x}_{m_b+i}$}. The cumulative offset remains unchanged ($\Delta'=\Delta$). The updated state $(s',\Delta')$ is given by the mapping
\begin{align*}
\tau^{(c,\;\text{substitution})}(i,s,\Delta) =
\begin{cases}
\vcenter{\hbox{$(s+1, \Delta)$}}, &
\parbox[t]{2cm}{
if \( \omega_i = 1 \), \\ and \( \beta_i^{(c)} = 0 \),} \vspace{0.1cm} \\
(s, \Delta), & \text{otherwise}.
\end{cases}
\end{align*}

\item \textbf{Insertion} (with probability $P_i$): The channel outputs two bits $(\tilde{y}_{i'},\tilde{y}_{i'+1})= (\sigma, \tilde{x}_{m_b+i} )$, where the first bit \mbox{$\sigma\sim\text{Bernoulli}(0.5)$} is the inserted one, and the second bit is identical to the input. The cumulative offset is increased by one ($\Delta'=\Delta+1$). Define the following mutually exclusive events
\begin{align*}
\Psi_i^1 &\triangleq \{(\omega_i, \omega_{i+1}) = (1,1)\, \land\, \beta_i^{(c)}=1\, \land\, \sigma=\tilde{x}_{m_b+i}\}, \\
\Psi_i^2 &\triangleq \{(\omega_i, \omega_{i+1}) = (1,1)\, \land\, \beta_i^{(c)}=0\, \land\, \sigma\neq \tilde{x}_{m_b+i}\}, \\
\Psi_i^3 &\triangleq \{(\omega_i, \omega_{i+1}) = (1,1)\, \land\, \beta_i^{(c)}=1\, \land\, \sigma\neq \tilde{x}_{m_b+i}\}, \\
\Psi_i^4 &\triangleq \{(\omega_i, \omega_{i+1}) = (1,0)\, \land\, \beta_i^{(c)}=1\, \land\, \sigma= \tilde{x}_{m_b+i}\}, \\
\Psi_i^5 &\triangleq \{(\omega_i, \omega_{i+1}) = (1,0)\, \land\, \beta_i^{(c)}=0\, \land\, \sigma\neq \tilde{x}_{m_b+i}\}, \\
\Psi_i^6 &\triangleq \{(\omega_i, \omega_{i+1}) = (0, 1)\, \land\, \beta_i^{(c)}=1\}, \\
\Psi_i^{[2,6]} &\triangleq \Psi_i^2\, \lor\, \Psi_i^3\, \lor\, \Psi_i^4\, \lor\, \Psi_i^5 \lor\, \Psi_i^6.
\end{align*}
\noindent The updated state $(s',\Delta')$ is given by the mapping
\[
\tau^{(c,\;\text{insertion})}(i,s,\Delta) =
\begin{cases}
(s+2, \Delta+1), &
\text{if } \Psi_i^1 = 1, \\[4pt]
(s+1, \Delta+1), &
\text{if } \Psi_i^{[2,6]} = 1, \\[4pt]
(s, \Delta+1), &
\text{otherwise.}
\end{cases}
\]
\end{enumerate}

\paragraph{Markov recursion}
For every $i\in[l_b]$ and pair of states $(s,\Delta)$ and $(s',\Delta')$, the transition probability is given by
\begin{equation*} \label{eq:transition-prob}
\pi_i\bigl( (s',\Delta') \mid (s,\Delta)\bigr) \triangleq
\sum_{c,e} 
\Pr(c,e) \;  \mathbb{E}\!\left[ \mathds{1}_{\{(s',\Delta')=\tau_i^{(c,e)}(s,\Delta)\}} \right],
\end{equation*}
where the expectation is over the auxiliary random variables $\sigma \sim\text{Bernoulli}(0.5)$ when the channel event $e$ is an insertion,
and $\beta_i^{(c)}\sim\text{Bernoulli}(0.5)$ only for
\(b=\tilde{k}\) and \(i\in[2t{+}1,3t]\);
otherwise the indicator is deterministic.

Since the four polarity cases are equiprobable and independent of the channel events, we have 
\begin{equation*} \label{eq:transition-prob}
\pi_i\bigl( (s',\Delta') \mid (s,\Delta)\bigr) =
\frac{1}{4} \sum_{c,e} 
 \Pr(e) \; \mathbb{E}\!\left[ \mathds{1}_{\{(s',\Delta')=\tau_i^{(c,e)}(s,\Delta)\}} \right],
\end{equation*}
where \mbox{$\Pr(\text{no error})=1-P_d-P_i-P_s$}, \mbox{$\Pr(\text{deletion})=P_d$}, \mbox{$\Pr(\text{insertion})=P_i$}, and \mbox{$\Pr(\text{substitution})=P_s$}. 

Starting from the initial distribution given in~\eqref{init},
the probability mass function evolves according to the Markov recursion
\begin{equation}\label{eq:forward}
  h^{(b)}_{i}(s',\Delta')
  \;=\;
  \sum_{(s,\Delta)}
  h^{(b)}_{i-1}(s,\Delta)\;
  \pi_i\bigl( (s',\Delta') \mid (s,\Delta)\bigr),
\end{equation}
for $i\in [l_b]$.
\subsection*{5. Marginalization, Error Probability, and Union Bound}
After processing all $l_b$ bits from blocks $\{b-1,b,b+1\}$, we marginalize over all offsets $\Delta\in [-(m_b+l_b),m_b+l_b]$ to obtain the PMF of $\rv{S}^{(b)}$:
\begin{equation}
   \Pr(\rv{S}^{(b)}=s) = \sum_{\Delta} h_{l_b}^{(b)}(s,\Delta).
\end{equation}
These steps are repeated for each block index $b \in [\tilde{k}]$ to determine the full set of PMFs required to apply the union bound in~\eqref{eqU} and evaluate the upper bound:
\begin{align}
 \Pr(\mathcal{E}_3)\leq \sum_{b=1}^{\tilde{k}} \Pr \Big(\rv{S}^{(b)} < \left \lceil \tfrac{t+1}{2} \right \rceil \Big).
\end{align}

\end{document}